\def\er{Erd\H{o}s-R\'{e}nyi}
\def\X{\mathcal{X}}
\def\W{\mathcal{W}}
\def\M{\mathcal{M}}
\def\x{\mathrm{x}}
\def\eps{\varepsilon}
\DeclareMathOperator*{\argmax}{arg\,max}
\def\etal{\emph{et~al.}\xspace}
\newtheorem{definition}{Definition}
\newcommand{\denselist}{\itemsep 0pt\parsep=1pt\partopsep 0pt}
\newcommand{\bitem}{\begin{itemize}\denselist}
\newcommand{\eitem}{\end{itemize}}
\newcommand{\benum}{\begin{enumerate}\denselist}
\newcommand{\eenum}{\end{enumerate}}
\begin{document}
\title{Influencers and
the Giant Component: the Fundamental Hardness in Privacy Protection for Socially Contagious Attributes}

\author{Aria Rezaei\\
arezaei@cs.stonybrook.edu\\ Stony Brook University
\and
Jie Gao \\ jg1555@rutgers.edu\\Rutgers University
\and
Anand D. Sarwate \\ anand.sarwate@rutgers.edu\\Rutgers University}
\date{}
\maketitle


\begin{abstract}
\small\baselineskip=9pt
The presence of correlation is known to make privacy protection more difficult. We investigate the privacy of \emph{socially contagious} attributes on a network of individuals, where each individual possessing that attribute may influence a number of others into adopting it. We show that for contagions following the Independent Cascade model there exists a giant connected component of infected nodes, containing a constant fraction of all the nodes who all receive the contagion from the same set of sources. We further show that it is extremely hard to hide the existence of this giant connected component if we want to obtain an estimate of the activated users at an acceptable level. Moreover, an adversary possessing this knowledge can predict the real status (``active'' or ``inactive'') with decent probability for many of the individuals \emph{regardless} of the privacy (perturbation) mechanism used. As a case study, we show that the Wasserstein mechanism, a state-of-the-art privacy mechanism designed specifically for correlated data, introduces a noise with magnitude of order $\Omega(n)$ in the count estimation in our setting. We provide theoretical guarantees for two classes of random networks: \er~graphs and Chung-Lu power-law graphs under the Independent Cascade model. Experiments demonstrate that a giant connected component of infected nodes can and does appear in real-world networks and that a simple inference attack can reveal the status of a good fraction of nodes.
\end{abstract}
\section{Introduction}\label{sec:intro}

Protecting the privacy of sensitive user data has always been a major challenge in the big data regime. Without it, users might be reluctant to share honest statistics with service providers and the integrity of collected data can suffer as a consequence. Recent findings have shown that the existence of correlation in data makes privacy protection an even more challenging task. Here, we focus at a special class of correlated data: attributes that spread through a network of individuals by jumping from one person to another. On top of contagious diseases that spread within communities, many behavioral attributes, such as political affiliation, obesity, or smoking habits, are also considered to be \emph{socially contagious}~\cite{christakis2007spread}. For an example of the privacy concerns, consider the following scenario: 
the local Department of Health (DoH) would like to know the fraction of residents who are regular smokers, by conducting a survey from the residents. To protect the privacy of the residents, each resident would report with a probabilistic perturbation (e.g., with a chance of $1/2$ report a random answer).  The DoH collects the perturbed reports and calculates the aggregate fraction by removing the introduced bias. 
If an adversary has the knowledge of the social ties
between the participants as well as the social influence model on smoking,
how confidently can she guess whether a given individual is a smoker or not? This can be phrased as a statistical inference attack. 

It is already recognized that Differential Privacy (DP), arguably the most commonly used privacy framework, along with other conventional privacy frameworks, claim no protection against additional outside information, including knowledge about correlation among data records. 
In our prior work~\cite{rezaei19privacy},
we have shown using experiments that there is a positive correlation between \emph{out-degree} of a node, which represents the total influence that a node exerts over others, and its vulnerability to an inference attack. 

This paper improves upon prior work in two ways. First, we show that obtaining good utility of estimating the fraction of infected users is impossible for a broad class of privacy mechanisms formulated by the success of statistical inference attacks. Secondly, our prior work provided only empirical evidence suggesting a relationship between node's influence and the loss of privacy.
In this paper we provide a rigorous explanation of this relationship. To the best of our knowledge, we provide first analysis of the fundamental challenges that exist in data privacy when the attributes are socially contagious and the contagion follows one of the most widely used models, the Independent Cascade model.

\smallskip
\noindent\textbf{Our Contribution:}
At the core of privacy protection, there is a trade-off between privacy and utility, characterized by the magnitude of noise added to data or query answer.
We show that under reasonable assumptions, finding a good trade-off between utility and privacy for such individuals could be extremely difficult. Our contributions are summarized as follows:
\bitem
\item For {\er}  and Chung-Lu power-law networks \cite{Chung-Lu-Networks}, we show in Theorems~\ref{theo:er_influencer_privacy} and \ref{theo:cl_influencer_privacy} that under reasonable assumptions there exists one giant connected component, containing a constant fraction of all nodes, with either all of its nodes possessing the contagious attribute or none. 
We next show that if an adversary has access to published estimate of the number of individuals with the target attribute, she can learn about the status of this giant component. Using this, she can infer the status of individuals in the network. Those that are most likely to be in this giant component have great influence over the activation of others and are called ``influencers''. We show that the influencers are most vulnerable to such an inference attack.
\item We show in Theorem~\ref{theo:hiding_c^H_1} that any privacy-preserving mechanism, in order to increase \emph{uncertainty} about the status 
of the giant component, needs to introduce an enormous amount of noise (a fraction of the number of all nodes), rendering low data utility.
\item We show in Theorem~\ref{theo:wass_noise} that the Wasserstein mechanism~\cite{SongYC:17pufferfish}, a state-of-the-art mechanism that is designed specifically to hide sensitive information in correlated data, will always add noise with a magnitude equal to a constant fraction of the number of all nodes, thus providing weak utility in all cases. This is regardless of the privacy requirements of the application and hence an even stronger result.
\item Through experiments on $4$ real-world networks, we show that the giant component indeed appears in real networks. Moreover, a sizeable number of nodes are in fact very likely to be a part of it and as a result, an adversary can easily infer their real activation status.
\eitem
Remark that we go beyond the known results on the difficulties of correlated data privacy under conventional frameworks, such as DP. We make minimal assumptions about the privacy mechanism in question and show that preventing an adversary from inferring the true status of certain nodes is impossible even for very weak utility guarantees.
Our results are thus applicable to a wide range of privacy mechanisms, including DP.

\section{Related Work}\label{sec:related}
Protecting the privacy of correlated data has been a longstanding challenge in the literature. One of the most commonly used privacy frameworks is Differential Privacy (DP).
Under DP, entities protect a data record $x$ against an adversary when publishing sensitive data or answering queries, by bounding the probability of the aggregate statistics being the same, in a database with or without $x$. ~\cite{DworkMNS06}.
Nevertheless, when data is correlated, an adversary could still infer the status of $x$, by using data records oter than $x$
~\cite{kifer2014pufferfish,he2014blowfish,ZhuXLZ:15correlated,Kifer2011-pu} (notice that this does not violate the DP guarantee)
Relaxed definitions of DP have been proposed, such as group differential privacy~\cite{dwork2013algorithmic} or specific models for correlated/dependent data~\cite{GhoshK:17inferential,ZhuXLZ:15correlated,ZhaoZP:17ddp}. However, the proposed methods do not look at network data or contagion models as we do here, and instead focus on temporal correlation (e.g. location, time series). 
The most closely related relaxation is the Pufferfish framework~\cite{kifer2014pufferfish}, where privacy is specifically defined for given pairs of ``secrets''  which should be \emph{indistinguishable}, and a set of plausible priors 
which can contain known correlation structures in the data. Specialized Pufferfish mechanisms have been proposed for some problems~\cite{kifer2014pufferfish,YangSN:15bayesian,he2014blowfish}. For general Pufferfish problems, Song \etal~\cite{SongYC:17pufferfish} proposed the Wasserstein Mechanism
which in the end adds
noise to a query's answer, with a variance tuned to the assumed prior distribution's structure. We discuss the Wasserstein mechanism in depth in Section \ref{sec:hiding_superspreader}.

Our work is also different from prior work that consider the network structure to be private and the goal is to report network statistics such as degree distributions~\cite{Task2012-hi}. The most relevant to our work in the one by 
Song \etal~\cite{SongYC:17pufferfish} and Rezaei \etal~\cite{rezaei19privacy}. The former assumes correlated attributes in a social network where nodes of the same community have the same attributes; a rather simplistic model for correlation compared to the widely used contagion models such as Independent Cascade or Linear Threshold. Our work and the one in~\cite{rezaei19privacy} both consider contagion models, but the results here are more comprehensive, and have stronger theoretical guarantees.

\section{Preliminaries}\label{sec:prelim}


\subsection{Data Model}\label{sec:data}
Suppose that there are $n$ individuals, each having a binary attribute $\x_i \in \{0, 1\}$.
Suppose further that the individuals influence one another through a network, $G(V, E)$, where each node $v \in V$ represents an individual and the existence of an edge between two nodes $v, u \in V$ means that the corresponding individuals exert influence on each other.
We assume that the attribute $\{\x_i\}$ is \emph{contagious}, meaning that starting from a \emph{seed set} of nodes, $S$ ($\left| S \right| = s$), the attributes follow a contagion process and spread through the network.
The attributes are considered to be sensitive information.
In the following, we formally define the contagion process and present the privacy problem statement.
\subsection{Contagion Model}\label{sec:contagion_model}
One of the most commonly used models for contagions is the Independent Cascade (IC) model, used predominantly to model the spread of disease in communities.
Let $S$ denote the initial set of nodes that are activated before a contagion starts. Upon activation, each node has exactly one chance to activate each of its neighbors with probability equal to a \emph{transmission rate}, $q \in (0, 1]$. The cascade stops when no additional node is activated.
Similar to some of the previous works~\cite{JustFewRandomSeeds}, when an edge exists between nodes $v$ and $u$, we couple the event of $v$ activating $u$ to the event of $u$ activating $v$.

To analyze cascades over networks, Kempe et al.~\cite{kempe2003maximizing} have introduced the use of \emph{Triggering Set} technique. Suppose that at time $0$ we flip a biased coin with probability $q$ for each edge in the graph and choose to ``trigger'' the edge or not. After that, a node $v$ is activated if and only if there is a path between $v$ and one of the seed nodes via triggered edges. In the case of IC, it suffices for a seed node to appear in the connected component of $v$ for $v$ to become activated.
\subsection{Privacy and Utility}\label{sec:privacy_model}
We consider applications where the number of activated nodes in a population of $n$ individuals, $X_n = \sum \x_v$, is of interest, with  $\X_n = \{0,1,\ldots,n\}$ the possible values of $X_n$. We drop the $n$ when it is clear from context. Since individuals may be reluctant to report their true attributes, an estimate or approximation of $X$ is made using a privacy-preserving mechanism $\M$ which, given the sensitive value $X$, produces a \emph{perturbed} value $\M(X)$ to report.
We measure the utility of $\M$ by the amount of noise it adds to the real value, $\left| \M(X) - X \right|$. We associate high utility with relatively low error, and precisely \emph{sublinear asymptotic error}, defined below.
\begin{definition}[Sublinear Asymptotic Error]\label{def:strong_utility}
Consider a sequence of randomized privacy mappings $\M_n  \colon \X_n \mapsto \X_n$, where $n$ is the total number of individuals. Then $\M_n$ guarantees \emph{sublinear asymptotic} error, if with high probability\footnote{$E$ happens with \emph{high} probability if $\lim_{n \rightarrow \infty} P(E) \rightarrow 1$.} (w.h.p) over $\M$ and the graph generation process we have:
\begin{equation}
    \lim_{n \rightarrow \infty} \frac{\left| \M_n(X_n) - X_n \right|}{n} \rightarrow 0,
\end{equation}
for all values of $X_n \in \mathcal{X}_n$.
\end{definition}
Note that the above indicates that $|\M_n(X_n) - X_n| = o(n)$ for all values of $X_n$.
The association between high utility and sublinear asymptotic error is based on real-world applications and is common in the literature~\cite{dwork2013algorithmic}.
In many cases, such as when a survey is conducted from a subset of a large population of size $n$, the margin of error of the resulting estimate from the sample population will be $o(n)$, in other words \emph{negligible} in comparison to the population size, $n$. Hence, a privacy mechanism that adds noise to $X$ that is, w.h.p, at most $o(n)$ is considered to have high utility since it does not change the value of $X$ in any significant way.

\section{Giant Components and the Privacy of Influencers}\label{sec:high_degree_privacy}
In this section, we investigate the hardness of protecting the privacy of nodes whose activation causes a large number of other nodes' activation, or \emph{influencers}.
Specifically, we will show that over two classes of random networks, if the privacy mechanism guarantees sublinear asymptotic error (from Definition~\ref{def:strong_utility}), one can infer the status of the influencers with high confidence regardless of the details of the privacy mechanism and the amount of noise added.
The specific definition of influencers will depends on the characteristics of the graph. For the two random networks that we study, influencers are simply those with highest degrees. In both networks, and in real graphs, under the right circumstances a \emph{giant component} consisting of a constant fraction of all nodes appears in the triggering set of the network. If any of the seed nodes end up in this giant component, all nodes inside of it will be activated.
We show that the knowledge of the activation of such a cluster of nodes allows an adversary mount a successful inference attack.
\subsection{\er~Graphs}\label{sec:er_graphs}
An \emph{\er}~ graph is an undirected and simple graph $G(V,E)$ with $n$ nodes ($|V| = n$) where each possible edge $(v,u)$ exists with probability $p \in (0, 1)$. 

In the case where $np > 1$, the size of the largest connected component, known as the \emph{giant} component, is $\Theta(n)$ w.h.p.
More specifically, suppose that $C_i$ is the $i^{\text{th}}$ largest connected component and $|C_i|$ denotes its size. Then, if $np > 1$, $|C_1| \sim yn$ where $y$ is the solution to the following~\cite{TheProbabilisticMethod}:
\begin{equation}\label{eq:er_y}
    e^{-npy} = 1 - y.
\end{equation}
Furthermore, all other components are \emph{tiny}; for all $i \geq 2$, we have $|C_i| = O(\log{n})$. Finally, in case where $np$ is a constant above $1$, w.h.p the maximum degree is~\cite{frieze2016introduction}:
\begin{equation}\label{eq:er_max_degree}
    \Delta \approx \frac{\log{n}}{\log{\log{n}}}, 
\end{equation}

Note that the triggering set of an Independent Cascade with transmission rate $q$ over $G(n,p)$ is itself an \er~graph $G(n,pq)$. This means that if $npq > 1$, a connected component with a fraction of all nodes exists; either nodes inside the cluster are all activated, or none of them are.

We now present our main theorem on the hardness of protecting privacy for influencers.
We denote the triggering set created by the process outlined in Section~\ref{sec:contagion_model} by $H$ and its $i^{\text{th}}$ largest connected component by $C^H_i$.
\begin{theorem}\label{theo:er_influencer_privacy}
Suppose that the nodes in $G(n,p)$ are numbered in the decreasing order of their degree; with node $i$ having the $i$-th highest degree.
Suppose further that the number of seeds $s = o(n/\log^2{n})$, the privacy mechanism $\M$ with maximum error $e_{\M}$ guarantees sublinear asymptotic error, and $npq > 1$. Then, for any $\epsilon \in [0, 1)$, there exists a $k \geq 1$ such that:
\begin{align}
    \Pr[\x_i = 1 \mid \M(X) \geq |C^H_1| - e_{\M}] &\geq 1 - \epsilon \nonumber \\
    \Pr[\x_i = 0 \mid \M(X) \leq s|C^H_2| + e_{\M}] &\geq 1 - \epsilon,\label{eq:er_influencer_privacy}
\end{align}
for all $i \leq k$ and large enough $n$. 
\end{theorem}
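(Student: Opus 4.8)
The plan is to reduce both inequalities to a single \emph{deterministic} implication that holds on a high-probability event $\mathcal{G}$ depending only on the triggering set $H$ and the error bound $e_{\M}$, and then translate that implication into the two conditional bounds. Since $H$ is itself an \er{} graph $G(n,pq)$ with $npq>1$ a constant, the facts in Section~\ref{sec:er_graphs} (with $p$ replaced by $pq$) give, with high probability: (i) $|C^H_1|=(y_H+o(1))n$, where $y_H\in(0,1)$ solves $e^{-npq\,y_H}=1-y_H$; (ii) $|C^H_j|=O(\log n)$ for every $j\ge 2$; and (iii) the $k$ largest degrees in $G$ grow without bound, hence so do the corresponding degrees in $H$ (keep each incident edge independently with probability $q$; a Chernoff bound and a union bound over the $k$ vertices).

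The only structural ingredient needing an argument is that the $k$ highest-degree nodes of $G$ lie in $C^H_1$ w.h.p. Fix such a node $v$ and condition on $H\setminus v$ and on $\deg_H(v)=d$: the giant of $H\setminus v$ has $\sim y_H n$ vertices, $v$'s $d$ neighbours form a uniform $d$-subset of the remaining vertices, and $v\in C^H_1$ unless all of them avoid that giant — probability $\le(1-y_H+o(1))^{d}$ — since otherwise $v$ only joins components of total size $O(d\log n)=o(n)$. As $d\to\infty$ by (iii), a union bound over $i\le k$ gives $\Pr[\exists\,i\le k:\ i\notin C^H_1]=o(1)$. Let $\mathcal{G}$ be the intersection of (i), (ii), $\{1,\dots,k\}\subseteq C^H_1$, and $\{e_{\M}=o(n)\}$ (which holds w.h.p.\ by sublinear asymptotic error); then $\Pr[\mathcal{G}]\to1$, and this holds for \emph{every} fixed $k$, which is what the theorem requires.

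On $\mathcal{G}$ the two conditioning events force the activation status of the giant, and therefore of every influencer. If $\M(X)\ge |C^H_1|-e_{\M}$, then $X\ge |C^H_1|-2e_{\M}=(y_H-o(1))n$; but if no seed lies in $C^H_1$ the activated set is contained in at most $s$ of the \emph{other} components, so $X\le s\,|C^H_2|=O(s\log n)=o(n)$ using $s=o(n/\log^2 n)$ — impossible for large $n$. Hence $C^H_1$ is activated and $\x_i=1$ for all $i\le k$. Symmetrically, if $\M(X)\le s\,|C^H_2|+e_{\M}$, then $X\le s\,|C^H_2|+2e_{\M}=o(n)<|C^H_1|$, so $C^H_1$ is not activated and $\x_i=0$ for all $i\le k$.

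It remains to pass to conditional probabilities. For the first inequality, $\{\x_i=0\}\cap\{\M(X)\ge |C^H_1|-e_{\M}\}\subseteq\mathcal{G}^c$, so this event has probability $o(1)$; moreover whenever a seed falls in $C^H_1$ and $\mathcal{G}$ holds, $\M(X)\ge X-e_{\M}\ge |C^H_1|-e_{\M}$, so the conditioning event has probability $\ge\Pr[\text{seed in }C^H_1]-\Pr[\mathcal{G}^c]\ge y_H-o(1)$ (as $|C^H_1|\sim y_Hn$ and $s\ge1$). Dividing gives $\Pr[\x_i=1\mid\cdot]\ge1-o(1)\ge1-\epsilon$ for large $n$. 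The second inequality is the same except that the event $B=\{\M(X)\le s|C^H_2|+e_{\M}\}$ can have probability tending to $0$ when $s\to\infty$ (then $C^H_1$ is activated w.h.p., so $X=\Theta(n)$ and $B$ fails): \textbf{this is the main obstacle.} Since $\{C^H_1\text{ not activated}\}\subseteq B$ always, it suffices to show $\Pr[\{\x_i=1\}\cap B]=o(\Pr[C^H_1\text{ not activated}])$. On $\{\x_i=1\}\cap B$ either a structural event of probability $e^{-\Omega(n)}$ fails (negligible against $\Pr[C^H_1\text{ not activated}]\ge e^{-\Omega(s)}$ since $s=o(n)$), or the giant is not activated and $i\notin C^H_1$; and conditioning on $H$, $\Pr[C^H_1\text{ not activated}\mid H]$ is a decreasing function of $|C^H_1|$ that changes only by a factor $1\pm O(s\log^2 n/n)=1\pm o(1)$ when vertex $i$ is moved in or out of $C^H_1$ (which changes $|C^H_1|$ by $O(\log^2 n)$ vertices) — here is exactly where $s=o(n/\log^2 n)$ enters — so $\Pr[i\notin C^H_1\mid C^H_1\text{ not activated}]=(1+o(1))\Pr[i\notin C^H_1]=o(1)$. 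Combining, $\Pr[\x_i=0\mid B]\ge1-o(1)\ge1-\epsilon$.
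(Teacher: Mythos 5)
Your proposal is correct and, for the structural core, takes essentially the same route as the paper: the triggering set is $G(n,pq)$, the gap $|C^H_1|-s|C^H_2|-2e_{\M}=\Theta(n)$ makes the two conditioning events reveal the giant's activation status, and the highest-degree nodes land in $C^H_1$ w.h.p. Your argument for that last fact (condition on $H\setminus v$ and on $\deg_H(v)=d$, neighbours form a uniform $d$-subset, avoidance probability $(1-y_H+o(1))^d$) is the same computation as the paper's Lemma~\ref{lem:trig_approx} combined with its Chernoff bound on $d^H_i$; the paper just packages it as an explicit bound $\exp(-dq/8)+\exp(-dqy/2)$ and plugs in $\Delta\approx\log n/\log\log n$. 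Where you genuinely diverge is the translation into conditional probabilities: the paper disposes of this with Lemma~\ref{lem:P(x_v|M(X))_bound}, whose one-line proof silently assumes that conditioning on the observed range of $\M(X)$ does not distort $\Pr[i\in C^H_1]$, and never asks whether the conditioning events have non-vanishing probability. You do both, and in particular you correctly identify that the second event $B=\{\M(X)\le s|C^H_2|+e_{\M}\}$ can have probability $e^{-\Theta(s)}\to0$, forcing a relative bound $\Pr[\{\x_i=1\}\cap B]=o(\Pr[C^H_1\text{ not activated}])$. This is a real improvement in rigor over the paper and is also where $s=o(n/\log^2 n)$ earns its keep beyond merely giving $s|C^H_2|=o(n)$.

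One soft spot in your added refinement: you dismiss the failure of the structural event as having probability $e^{-\Omega(n)}$, but the standard bound on the second-largest component of a supercritical $G(n,c/n)$ being $O(\log n)$ fails only with polynomially small probability, not exponentially small. Against $\Pr[C^H_1\text{ not activated}]\approx(1-y_H)^s=e^{-\Theta(s)}$ this comparison breaks down for $\log n\ll s\ll n/\log^2 n$, so in that regime you would need either a sharper structural tail bound or to restate the conclusion conditionally on the structural event (which is arguably what the theorem intends, since its hypotheses already reference the random quantities $|C^H_1|,|C^H_2|$). The paper does not engage with this issue at all, so this is a caveat on your extra step rather than a defect relative to the published argument.
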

Here, we have assumed that the
number of initial seeds is not an unrealistically high number, as is usually the case in real-world scenarios. For instance, organic cascades, such as a contagious disease, originate from a handful of initial infected individuals.
We have also assumed that each node, upon activation, activates \emph{on average} at least one other node ($npq > 1$).

Notice that the two conditions in \eqref{eq:er_influencer_privacy} correspond to when $C^H_1$ is activated and
$\M(X) \geq |C^H_1| - e_{\M}$,
or when $C^H_1$ is inactive and 
$\M(X) \leq e_{\M} + s|C^H_2|$. Since $npq > 1$, w.h.p $C^H_1 \sim yn$ where $y$ is the solution to \eqref{eq:er_y} when $p$ is substituted with $pq$. This means that the two conditions in \eqref{eq:er_influencer_privacy} are also exhaustive since the difference between the two values on the right side of the inequalities is:
\begin{equation*}
    |C^H_1| - s|C^H_2| - 2e_{\M} = \Theta(n).
\end{equation*}
This means that one could infer the activation of $C^H_1$ from the value of $\M(X)$ with with high confidence. We make this claim more rigorous in Section~\ref{sec:hiding_superspreader}.

As an example, consider the case when $n = 10{,}000$, $s = \log{n} \leq 10$, $\left| \M(X) - X \right| \leq \sqrt{n}$, and $y = 0.4$ ($pq \approx 1.28$). In the event that $C^H_1$ is not activated we will have $\M(X) \leq \log^2({10{,}000}) + 100 \leq 200$, while in the event of $C^H_1$'s activation we have $\M(X) \geq 3900$. Since the value of $\M(X)$ reveals the activation status of $C^H_1$, we can use this information to bound $\Pr[\x_i = \cdot]$ using the two lemmas below the proof of which you can find in Section~\ref{app:trig_approx_proof} of the Supplementary material.
\begin{lemma}\label{lem:P(x_v|M(X))_bound}
With a given $\M(X)$, for node $i$ we have:
\begin{align*}
    {}& (1) \; \Pr[\x_i = 1 \mid \M(X) \geq |C^H_1| - e_{\M}] \geq \Pr[i \in C^H_1] \\
    {}& (2) \; \Pr[\x_i = 0 \mid \M(X) \leq s|C^H_2| + e_{\M}] \geq \Pr[i \in C^H_1].
\end{align*}
\end{lemma}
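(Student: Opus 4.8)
The plan is to show that, in the regime of Theorem~\ref{theo:er_influencer_privacy} (large $n$, with high probability over the graph), conditioning on $\M(X)$ lying in either stated range is \emph{equivalent} to conditioning on whether the giant triggering-set component $C^H_1$ is activated, and then to transfer this to node $i$ through its membership in $C^H_1$.

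\textbf{Step 1: reduce $\M(X)$ to the status of $C^H_1$.} I would first record two deterministic facts about the triggering-set description of IC. If some seed lies in $C^H_1$, then, $C^H_1$ being a connected component of the triggering set, all of $C^H_1$ is activated, so $X \ge |C^H_1|$. If no seed lies in $C^H_1$, the activated set is a union of at most $s$ triggering-set components, none equal to $C^H_1$, each of size at most $|C^H_2|$, so $X \le s|C^H_2|$. With $|\M(X)-X| \le e_{\M}$ these give ``$C^H_1$ activated $\Rightarrow \M(X) \ge |C^H_1| - e_{\M}$'' and ``$C^H_1$ inactive $\Rightarrow \M(X) \le s|C^H_2| + e_{\M}$''. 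These two conclusions cannot hold at once once $n$ is large: applying the \er~facts quoted above to $H = G(n,pq)$ (legitimate since $npq>1$), w.h.p. $|C^H_1| = \Theta(n)$ and $|C^H_2| = O(\log n)$, so with $s = o(n/\log^2 n)$ and $e_{\M} = o(n)$ the slack $|C^H_1| - s|C^H_2| - 2e_{\M}$ is $\Theta(n) > 0$. Hence $\{\M(X) \ge |C^H_1| - e_{\M}\} = \{C^H_1\text{ activated}\}$ and $\{\M(X) \le s|C^H_2| + e_{\M}\} = \{C^H_1\text{ inactive}\}$, and it suffices to prove (1) and (2) with these events in place.

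\textbf{Step 2: the two bounds.} For (1): on $\{C^H_1\text{ activated}\}$ every node of $C^H_1$ is activated, so $\{i\in C^H_1\}\subseteq\{\x_i=1\}$ there and $\Pr[\x_i=1\mid C^H_1\text{ activated}] \ge \Pr[i\in C^H_1 \mid C^H_1\text{ activated}]$; it remains to check the conditioning does not hurt, i.e.\ $\Pr[i\in C^H_1\mid C^H_1\text{ activated}] \ge \Pr[i\in C^H_1]$ — morally, a bigger giant is both more likely to contain $i$ and more likely to swallow a seed. I would make this precise by conditioning on the triggering edges (which fixes $|C^H_1|$), taking the seeds independent of the triggering (say uniform), noting $\Pr[C^H_1\text{ activated}\mid \text{edges}]$ is nondecreasing in $|C^H_1|$ and, by vertex-exchangeability, $\Pr[i\in C^H_1\mid\cdot] = \mathbb{E}[|C^H_1|\mid\cdot]/n$, so the claim reduces to $\mathrm{Cov}\bigl(|C^H_1|,\mathbf{1}[C^H_1\text{ activated}]\bigr) \ge 0$. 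For (2): on $\{C^H_1\text{ inactive}\}$ no node of $C^H_1$ is activated; and if $i\notin C^H_1$ then $i$ lies in a component of size $\le |C^H_2| = O(\log n)$, which by a union bound over the $s = o(n/\log^2 n)$ seeds contains no seed with probability $1-O(s\log n/n) = 1-o(1)$, again forcing $\x_i = 0$. So $\Pr[\x_i=0\mid C^H_1\text{ inactive}] \ge 1-o(1) \ge \Pr[i\in C^H_1]$ for $n$ large; the point needing care is to make the $o(1)$ beat $1-\Pr[i\in C^H_1]$ \emph{uniformly} in $i$ (the top-degree nodes already have $\Pr[i\in C^H_1] = 1-o(1)$), which is cleanest via comparing $\Pr[i\notin C^H_1\mid C^H_1\text{ activated}]$ with $\Pr[i\notin C^H_1\mid C^H_1\text{ inactive}]$ and using that deleting one vertex from the giant barely changes its chance of catching a seed.

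\textbf{Main obstacle.} I expect the real work to be the two correlation statements in Step 2: ``$C^H_1$ activated'' and ``$i\in C^H_1$'' are \emph{not} monotone in the edge indicators (adding an edge can merge components and change which one is largest), so FKG/Harris does not apply verbatim; the fix is to condition on the triggering set to recover vertex-exchangeability and then control the residual covariance (and the $i$-dependence of the probability that $i$'s small component misses every seed) by a short coupling argument, under the standing assumption that the seed count is small. Step 1 and the reduction to the status of $C^H_1$ are just bookkeeping with the $\Theta(n)$-versus-$o(n)$ gap.
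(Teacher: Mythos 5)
Your Step 1 and the first line of your Step 2 are exactly the paper's proof: the paper argues that the condition $\M(X) \ge |C^H_1| - e_{\M}$ is equivalent (given the $\Theta(n)$ gap) to ``$C^H_1$ is activated,'' and then simply observes that membership in an activated $C^H_1$ is one way for $i$ to be active (and membership in an inactive $C^H_1$ is one way for $i$ to be inactive), ``hence the lower bound.'' Where you diverge is that you notice the paper's argument only directly yields $\Pr[\x_i=1 \mid C^H_1 \text{ act.}] \ge \Pr[i\in C^H_1 \mid C^H_1 \text{ act.}]$, whereas the lemma's right-hand side is the \emph{unconditional} $\Pr[i\in C^H_1]$ (which is what Lemma~\ref{lem:trig_approx} and the Chernoff bound in the main text actually estimate). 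The paper silently conflates the two; you flag it as the main obstacle and propose a positive-association argument. That is a genuine gap in the published proof and your instinct about where the work lies is right. Your patch for claim (2) (bounding $\Pr[\x_i=0\mid C^H_1 \text{ inact.}]$ below by $1-o(1)$ via the union bound over $s=o(n/\log^2 n)$ seeds hitting an $O(\log n)$-sized component) is clean and sidesteps the issue entirely. Your patch for claim (1), however, is only sketched and leans on vertex-exchangeability of $G(n,pq)$; that symmetry is broken precisely in the way the lemma is used, since Theorems~\ref{theo:er_influencer_privacy} and~\ref{theo:cl_influencer_privacy} apply it to the node of $i$-th highest degree (and, in the Chung-Lu case, to nodes with prescribed weights $w_i$), so $\Pr[i\in C^H_1\mid |C^H_1|=m]\neq m/n$ there. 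You would need either to argue monotonicity of $\Pr[i\in C^H_1\mid |C^H_1|=m]$ in $m$ for the specific node $i$, or to note (as your own estimate \eqref{eq:i_not_in_C^H_1_bound}-style bounds suggest) that the conditional and unconditional probabilities are both $1-o(1)$ for the high-degree nodes the theorem cares about, which would close the gap without any correlation inequality. In short: same skeleton as the paper, more honest about a real hole, but the proposed filling for part (1) is not yet complete.
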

\begin{lemma}\label{lem:trig_approx}
When $n$ is large enough, the probability that a node $i$ with a fixed degree $k$ in $H$ is in $C^H_1$ is approximated by:
\begin{equation}\label{eq:i_C^H_1}
    \Pr[i \notin C^H_1] \sim - \exp(-ky).
\end{equation}
\end{lemma}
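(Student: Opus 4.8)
The plan is to use the fact, already recorded in the text, that the triggering set $H$ is distributed as the \er\ graph $G(n,pq)$; the statement is then a purely structural one about $G(n,pq)$: conditioned on a fixed vertex $i$ having degree exactly $k$, how likely is $i$ to miss the unique giant component? First I would fix the conditioning cleanly. The $\binom n2$ edge indicators of $G(n,pq)$ are i.i.d.\ $\mathrm{Bernoulli}(pq)$, and they split into the $n-1$ indicators incident to $i$ and the $\binom{n-1}{2}$ indicators on $V\setminus\{i\}$, which are independent of each other. Conditioning on $\deg_H(i)=k$ therefore only affects the first block (making $N(i)$ a uniformly random $k$-subset of $V\setminus\{i\}$) and leaves the graph induced on $V\setminus\{i\}$ distributed exactly as $G(n-1,pq)$, independently of $N(i)$. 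Since $npq>1$, the standard giant-component theorem gives, w.h.p., a unique component $C$ of this $G(n-1,pq)$ with $|C|=yn(1+o(1))$, $y$ solving \eqref{eq:er_y} with $p\mapsto pq$, while every other component has size $O(\log n)$.

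The second step is the reduction ``$i\in C^H_1 \iff i$ has a neighbour in $C$''. If $i$ has a neighbour in $C$ then $i$'s component in $G$ contains $C$, hence has size $\ge yn(1+o(1))$ and is the giant; conversely, if $N(i)\cap C=\varnothing$, then $C$ is still a component of $G$ (so $C=C^H_1$) and every path leaving $i$ is confined, after its first edge, to one of the $O(\log n)$-sized components of $G(n-1,pq)$, so $i$'s component has size $1+k\cdot O(\log n)=O(\log n)$ and $i\notin C^H_1$. Consequently, conditioning further on $|C|=m$,
\[
\Pr[\,i\notin C^H_1 \mid \deg_H(i)=k,\ |C|=m\,]\;=\;\frac{\binom{n-1-m}{k}}{\binom{n-1}{k}}\;\xrightarrow[n\to\infty]{}\;(1-y)^k,
\]
using $m=yn(1+o(1))$ and $k$ fixed. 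Averaging over $|C|$ (which is within $o(n)$ of $yn$ w.h.p., the complementary event contributing $o(1)$) yields $\Pr[i\notin C^H_1]=(1-y)^k(1+o(1))$, which is the estimate \eqref{eq:i_C^H_1} once one uses the elementary bound $1-y\le e^{-y}$ — equivalently, approximating the $\mathrm{Bin}(k,y)$ count of neighbours of $i$ in $C$ by $\mathrm{Poisson}(ky)$; only this one-sided form is needed in the sequel, where $k$ is large.

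The main obstacle is making the two ``$(1+o(1))$'' replacements rigorous rather than heuristic, and the layered conditioning above is exactly what buys this: after exposing the edges inside $V\setminus\{i\}$ first, the set $C$ and the value $|C|$ are already determined before the $k$ neighbours of $i$ are drawn, so the hypergeometric identity is exact given $|C|$, and only a single external fact is invoked — the concentration $|C^H_1|=yn+o(n)$ with second-largest component $O(\log n)$ — which must be quoted with enough uniformity that the $o(1)$ error does not depend on the fixed $k$ (true, e.g., from the standard exploration/branching-process analysis of sparse $G(n,c/n)$). Note that neither the seed-set bound $s=o(n/\log^2 n)$ nor any property of the privacy mechanism is used here: the lemma is entirely about the random graph $G(n,pq)$.
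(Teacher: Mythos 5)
Your proposal is correct and takes essentially the same route as the paper's own proof: condition on $\deg_H(i)=k$, note that the $k$ neighbours of $i$ are then a uniformly random subset, and compute the hypergeometric probability $\binom{n-m}{k}/\binom{n}{k}$ that all of them miss the giant component of size $m=yn\pm o(n)$. You supply the rigor the paper elides (the two-stage edge exposure that makes $N(i)$ independent of the component structure on $V\setminus\{i\}$, and the equivalence between ``$i\in C^H_1$'' and ``$i$ has a neighbour in $C$'' via the $O(\log n)$ bound on the other components), and you correctly observe that the exact limit is $(1-y)^k$ while $e^{-ky}$ is only the one-sided bound from $1-y\le e^{-y}$ --- which is all that is needed for \eqref{eq:i_not_in_C^H_1_bound}; incidentally, the minus signs in \eqref{eq:i_C^H_1} and \eqref{eq:choose_exponential} are typos.
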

The degree of any node $i$ in $H$, denoted by $d^H_i$, follows a binomial distribution characterized by $d$, its degree in $G$, and $q$, the transmission rate. Applying Chernoff bound for lower tail, we have:
\begin{equation}\label{eq:upper_bound_dq/2}
    \Pr\left[d^H_i \leq \frac{dq}{2}\right] \leq -\exp\left(-\frac{dq}{8}\right).
\end{equation}
Since the function $\exp(-ky)$ is decreasing in $k$, we have:
\begin{multline}
    \Pr[i \notin C^H_1] \leq \exp(0)\Pr\left[d^H_i \leq \frac{dq}{2}\right] + \\ \exp\left(-\frac{dqy}{2}\right)\Pr\left[d^H_i > \frac{dq}{2}\right].
\end{multline}
Combining this with \eqref{eq:upper_bound_dq/2}, we have:
\begin{equation}
    \Pr[i \notin C^H_1] \leq \exp\left(-\frac{dq}{8}\right) + \exp\left(-\frac{dqy}{2}\right).\label{eq:i_not_in_C^H_1_bound}
\end{equation}
Plugging in the above node $1$ with maximum degree, $\Delta \approx \log{n}/\log{\log{n}}$ from \eqref{eq:er_max_degree}, we can write for any $\epsilon > 0$ and an $n$ large enough:
\begin{equation}\label{eq:prob_limit_omega_n}
    \Pr[1 \in C^H_1] \geq 1 - \epsilon,
\end{equation}
This satisfies the conditions in \eqref{eq:er_influencer_privacy} according to Lemma~\ref{lem:P(x_v|M(X))_bound}.
Depending on the value of $\epsilon$, there will be other high-degree nodes for which the bound in \eqref{eq:i_not_in_C^H_1_bound} goes below $\epsilon$ since its value decreases exponentially in $d$. This shows that the $k$ in Theorem~\ref{theo:er_influencer_privacy} is always $\geq 1$ which concludes the proof of Theorem~\ref{theo:er_influencer_privacy}.
\subsection{Chung-Lu Power-Law Networks}\label{sec:power-law}
Despite being extensively studied, \er~graphs do not represent many phenomena observed in reality; in particular, the \emph{power-law} degree distribution that is common in many real-world networks~\cite{barabasi1999emergence}. To study such networks, we turn to \emph{Chung-Lu} networks \cite{Chung-Lu-Networks} with a degree distribution that follows a power-law distribution.
Let $\langle w_1, \ldots, w_n \rangle$ ($w_i \in \mathbb{R}_+$) be weights assigned to $n$ individuals and $\ell_n = \sum w_i$. A \emph{Chung-Lu} network is comprised of $n$ nodes, where an edge between two nodes $i$ and $j$ appears with probability:

\begin{equation}\label{eq:cl_edge}
P_{ij} = \min\left(1, \frac{w_i w_j}{\ell_n}\right).
\end{equation}
To generate a network that resembles a power-law network, we assign the weights using the following function:
\begin{equation}\label{eq:power_law_function}
    w_i = \left[1 - F\right]^{-1}(i/n), \;\; F(x) = 1 - (d/x)^b,
\end{equation}
where $b$ determines the \emph{scale} of the power-law distribution and $d$ is the minimum expected degree in the graph. You can see that the $i$-th ranked node has a weight corresponding to the $i/n$ percentile of the distribution. Finding the solution to \eqref{eq:power_law_function} yields:
\begin{equation}\label{eq:w_i}
w_i = d\left(\frac{n}{i}\right)^{\beta},\; \beta = \frac{1}{b}.
\end{equation}
Note that the triggering set of a Chung-Lu network with parameters $d$ and $b$ for an IC with parameter $q$ is itself a Chung-Lu network with parameters $dq$ and $b$.
It has been shown~\cite{JustFewRandomSeeds,HofstadRandomGraph} that if (1) $b \in (0, 2]$ or (2) $b > 2$ and $d > (b-1)(b-2)$, the generated graph will have a giant component with a constant fraction of nodes, $\alpha n$ ($0 < \alpha < 1$), and many small components each being at most $O(\log{n})$ in size. 

As before, we denote the main graph by $G$, its triggering set by $H$, and the $i$-th biggest connected component of $H$ by $C^H_i$.
\begin{theorem}\label{theo:cl_influencer_privacy}
Suppose that $s = o(n/\log^2{n})$, the privacy mechanism $\M$ with maximum error $e_{\M}$ guarantees sublinear asymptotic error, and either (1) $b \in (0, 2]$ or (2) $b > 2$ and $dq > (b - 1)(b - 2)$. Then, for any $\epsilon > 0$, there is a subset of nodes nodes $V' \subseteq V$ that grows arbitrarily large in size ($|V'| = \omega(1)$) and for any node $i \in V'$, we have:
\begin{align}
    \Pr[\x_i = 1 \mid \M(X) \geq |C^H_1| - e_{\M}] &\geq 1 - \epsilon \nonumber \\
    \Pr[\x_i = 0 \mid \M(X) \leq s|C^H_2| + e_{\M}] &\geq 1 - \epsilon,\label{eq:cl_influencer_privacy}
\end{align}
\end{theorem}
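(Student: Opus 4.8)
Proof proposal.

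\medskip

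The plan is to follow the skeleton of the proof of Theorem~\ref{theo:er_influencer_privacy}. Lemma~\ref{lem:P(x_v|M(X))_bound} is stated for an arbitrary triggering set — it uses only that $C^H_1,C^H_2$ are the two largest components of $H$ — so it applies here unchanged, and it reduces the whole statement to producing a set $V'\subseteq V$ with $|V'|=\omega(1)$ on which $\Pr[i\in C^H_1]\ge 1-\epsilon$. I would first record that the two conditioning events in~\eqref{eq:cl_influencer_privacy} are exhaustive and disjoint for large $n$, exactly as in the remark following Theorem~\ref{theo:er_influencer_privacy}: since $H$ is a Chung--Lu graph with parameters $dq$ and $b$, hypotheses (1)/(2) give $|C^H_1|=\alpha n$ and $|C^H_2|=O(\log n)$, while $s=o(n/\log^2 n)$ and $e_{\M}=o(n)$ by sublinear asymptotic error, so $|C^H_1|-s|C^H_2|-2e_{\M}=\alpha n-o(n)=\Theta(n)>0$. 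I will also use the (standard) quantitative fact that the giant component of a supercritical Chung--Lu graph, and of that graph with one vertex deleted, spans a constant fraction of the total weight $\ell_n$.

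The heart of the argument is a Chung--Lu analogue of Lemma~\ref{lem:trig_approx}: for every vertex $i$ whose weight in $H$ satisfies $w_iq\to\infty$ as $n\to\infty$, $\Pr[i\notin C^H_1]\to 0$, with a rate depending only on how fast $w_iq$ grows. I would prove this by a two-round exposure argument. The degree $d^H_i$ is a sum of independent indicators with mean $\mathbb{E}[d^H_i]=\sum_{j\ne i}\min(1,w_iw_jq/\ell_n)$, which grows without bound once $w_iq\to\infty$; a Chernoff bound then gives $\Pr[d^H_i<t]\to 0$ for every fixed $t$ (the subcase where $w_i$ is so large that $i$ is adjacent to a constant fraction of all vertices is immediate). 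Next, delete $i$ from $H$; what remains is again essentially Chung--Lu, so with high probability its giant component $W$ spans a constant fraction of the vertices and of the weight. Conditioning on the vertex set $W$, the edges from $i$ into $W$ are independent with $\sum_{j\in W}\Pr[i\sim j]=\Omega(\min(w_iq,n))$, whence $\Pr[i\text{ has no neighbor in }W\mid W]\le\exp(-\Omega(\min(w_iq,n)))$, and any neighbor of $i$ in $W$ forces $i\in C^H_1$. Collecting terms, $\Pr[i\notin C^H_1]\le \Pr[d^H_i<t]+\exp(-\Omega(t))+o(1)$, which tends to $0$. One could instead quote the branching-process / local weak limit description of Chung--Lu graphs from~\cite{Chung-Lu-Networks,HofstadRandomGraph}, which directly gives $\Pr[i\in C^H_1]\to 1-e^{-w_i\eta}$ for a constant $\eta>0$; I would keep the self-contained exposure argument as the main line and the citation as backup.

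To finish, fix any $g(n)=\omega(1)$ with $g(n)=o(n)$ and set $V'=\{1,\dots,g(n)\}$, the $g(n)$ highest-weight vertices. For $i\in V'$ we have $w_i=d(n/i)^{\beta}\ge d(n/g(n))^{\beta}\to\infty$, so $w_iq\to\infty$ uniformly over $V'$, and the previous paragraph gives $\Pr[i\notin C^H_1]\le\epsilon$ for all $i\in V'$ once $n$ is large; Lemma~\ref{lem:P(x_v|M(X))_bound} then yields~\eqref{eq:cl_influencer_privacy}, and $|V'|=g(n)=\omega(1)$. (In fact the estimate above already gives $\Pr[i\notin C^H_1]\le\epsilon$ whenever $w_iq$ exceeds a constant depending only on $\epsilon$, i.e.\ whenever $i\le\delta n$ for a suitable $\delta=\delta(\epsilon)$, so one may even take $|V'|=\Theta(n)$; stating the theorem with the weaker $\omega(1)$ keeps the constant-fraction bookkeeping out of the proof.)

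The step I expect to be the main obstacle is the Chung--Lu analogue of Lemma~\ref{lem:trig_approx}. In the \er~case the triggering set is again an \er~graph and the clean fixed-point identity~\eqref{eq:er_y} drives the first-moment estimate~\eqref{eq:i_not_in_C^H_1_bound}; here the offspring distribution seen in a breadth-first exploration is size-biased and, in the heavy-tailed range $b\in(0,2]$, has infinite variance, so that elementary computation has no direct counterpart. One must therefore either run the two-round edge-exposure argument carefully — which needs the quantitative statement that the giant of a supercritical Chung--Lu graph, and of that graph with one vertex removed, carries a constant fraction of $\ell_n$, a fact that is transparent when $b>2$ but requires a short separate argument when $b\le 2$ — or import the inhomogeneous-random-graph machinery wholesale. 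The remaining pieces (the exhaustiveness computation, the Chernoff bound on $d^H_i$, and the choice of $V'$) are routine.
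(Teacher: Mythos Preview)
Your proposal is correct and follows the same skeleton as the paper: reduce via Lemma~\ref{lem:P(x_v|M(X))_bound} to bounding $\Pr[i\in C^H_1]$, verify exhaustiveness of the two conditioning events from $|C^H_1|=\Theta(n)$, $s|C^H_2|=o(n)$, $e_{\M}=o(n)$, and then show $\Pr[i\notin C^H_1]\to 0$ for all sufficiently high-weight vertices. The difference lies in how this last step is executed. You propose a two-round exposure argument (delete $i$, locate the giant of the remainder, then reveal the edges from $i$), and you correctly flag as the main obstacle the need to show that the giant carries a constant fraction of the total weight $\ell_n$, which is delicate when $b\le 2$. The paper sidesteps this entirely with a cruder one-line estimate: since $|C^H_1|=\alpha n$ and every vertex in $H$ has weight at least $w_n=dq$, it writes
\[
\Pr[i\notin C^H_1]\le\prod_{j\in C^H_1}\Bigl(1-\tfrac{w_iw_j}{\ell_n}\Bigr)\le\Bigl(1-\tfrac{w_iw_n}{\ell_n}\Bigr)^{\alpha n}\le\exp\!\Bigl(-\tfrac{dq\,\alpha\, n}{i^\beta\sum_j j^{-\beta}}\Bigr),
\]
and finishes with a case split on the growth of $\sum_j j^{-\beta}$ according to whether $\beta>1$, $\beta=1$, or $\beta<1$, obtaining $i=o(n^b)$, $i=o(n/\log n)$, and $i=o(n)$ respectively. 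Your route, if carried through, would yield the uniform $i=o(n)$ across all $b$ (because you use the \emph{weight} of the giant rather than merely its cardinality times the minimum weight), but at the cost of precisely the bookkeeping you singled out as the obstacle; the paper trades that sharpness for a completely elementary computation with no exposure or weight-concentration subtleties.
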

The proof of this theorem follows arguments similar to Theorem~\ref{theo:er_influencer_privacy} and is available in full in Section~\ref{app:cl_influence_proof} of the Supplementary material. The main takeaway is the following: depending on the value of $b$, there could be $3$ upper-bounds for the value of $i$, all $3$ of which can grow arbitrarily in size:
\begin{align}
    0 < b < 1 \colon & \; i = o\left(n^b\right), \\
    b = 1 \colon & \; i = o\left(\frac{n}{\log{n}}\right) \\
    b > 1 \colon & \; i =  o(n).
\end{align}
The $3$ equations above show that
for any $\epsilon$ in Theorem~\ref{theo:cl_influencer_privacy}, the subset $V' \subseteq V$ includes all nodes ranked $i \leq f(n)$ where $f$ can be any increasing function of $n$ that meets the corresponding asymptotic condition depending on $b$, for instance $f(n) = \log^2{n}$ in the case of $b = 1$.
\subsection{Notes on General Graphs}\label{sec:notes_general_graphs}
In general graphs too, if one retains each edge of a connected graph with probability $p$, for values of $p$ above a certain threshold\footnote{Define the second-order average degree $\tilde{d}=\sum_v d_v^2/(\sum_v d_v)$ where $d_v$ is the degree of $v$. If $p>(1+\eps)/\tilde{d}$, then almost surely, the percolated subgraph has a giant component.
Otherwise, if $p<(1-\eps)/\tilde{d}$, almost surely, there would be no
giant component.
} a \emph{giant component} appears with a constant fraction of all nodes and all the other components are of size $o(n)$~\cite{Chung2009-op}. This means that our findings are applicable to general graphs as well, however, the vulnerability of each node depends on its probability of appearance in that giant component, which depends on the structure of the graph. We show through experiments in Section~\ref{sec:exp} that for reasonable values of the transmission rate, $q$, a giant component appears in the triggering set and a sizeable number of all the nodes frequently appear in that component and are hence vulnerable to the inference attack introduced in this section.
\section{Hiding the Status of the Giant Connected Component}\label{sec:hiding_superspreader}
\def\tvd{\text{TVD}}
As we saw in Section~\ref{sec:high_degree_privacy}, when a privacy mechanism fails to hide the status of the giant component, $C^H_1$, an adversary can infer the real status of individual nodes by estimating the probability of their presence in $C^H_1$.
Here, we show that if there is to be \emph{any uncertainty} about the activation status of $C^H_1$, then the privacy mechanism \emph{cannot} provide sublinear asymptotic error.
\begin{figure}[!tb]
    \centering
    \includegraphics[width=0.95\columnwidth]{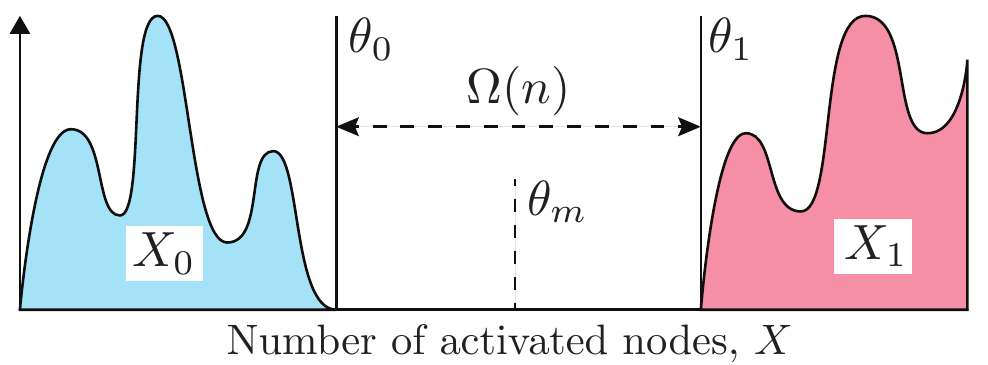}
    \caption{Probability distribution of $X$, when $C^H_1$ is \emph{not} activated, $X_0$ (left, blue), with maximum value $\theta_0$, and when $C^H_1$ \emph{is} activated, $X_1$ (right, red), with minimum value $\theta_1$. Note that $\theta_1 - \theta_0 = \Omega(n)$ in our setting.
    }
    \label{fig:X_0_and_X_1}
    \vspace*{-4mm}
\end{figure}
We denote by $X_1$ and $X_0$ the distribution of $X$ when $C^H_1$ is active and inactive, respectively. Let $\M \colon  [n] \mapsto [n]$ be a randomized privacy mechanism. Then, we denote the distribution of \emph{perturbed} values for $X_0$ and $X_1$ by $Z_0 = \M(X_0)$ and $Z_1 = \M(X_1)$. For brevity, we denote the probability density functions of these distributions  by the same symbol, i.e. $\Pr[\M(x \sim X_0) = a] = Z_0(a)$. We frame the problem as a hypothesis testing scenario. Suppose that the reported value $z$ is observed. Then, the two hypotheses are: (1) $H_0$, $z$ comes from $Z_0$ and (2) $H_1$, $z$ comes from $Z_1$.
\begin{theorem}\label{theo:hiding_c^H_1}
If the hypothesis testing error is at least a constant $0 < \epsilon < 1$, then $\M$ cannot provide sublinear asymptotic error.
\end{theorem}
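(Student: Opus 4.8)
The plan is to prove the contrapositive: assuming $\M$ provides sublinear asymptotic error, I show that $Z_0=\M(X_0)$ and $Z_1=\M(X_1)$ become asymptotically perfectly distinguishable, so the optimal hypothesis-testing error tends to $0$ and in particular drops below any fixed $\epsilon$.

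First I would pin down the separation between the two hypotheses, exactly as in the discussion following Theorem~\ref{theo:er_influencer_privacy}. If $C^H_1$ is not activated, every seed lies in a component of size at most $|C^H_2| = O(\log n)$, so $X_0 \le \theta_0 := s\,|C^H_2|$; since $s = o(n/\log^2 n)$ this gives $\theta_0 = o(n)$ w.h.p.\ over the graph and triggering process. If $C^H_1$ is activated then $X_1 \ge \theta_1 := |C^H_1|$, and the giant-component facts recalled in Section~\ref{sec:high_degree_privacy} give $|C^H_1| = \Theta(n)$ w.h.p., say $\theta_1 \ge c_1 n$ for a constant $c_1 > 0$. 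Hence there is a graph/triggering event $G_n$ with $\Pr[G_n]\to 1$ on which $\theta_1 - \theta_0 \ge c_1 n - o(n) = \Omega(n)$.

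Next I would invoke the utility hypothesis: by Definition~\ref{def:strong_utility} there is a sequence $\delta_n \to 0$ and an event $E_n$ (over the graph and the internal coins of $\M$) with $\Pr[E_n] \to 1$ on which $|\M(x) - x| \le \delta_n n$ for \emph{every} $x \in \X_n$. On $E_n \cap G_n$ we then have $\M(X_0) \le \theta_0 + \delta_n n$ and $\M(X_1) \ge \theta_1 - \delta_n n$; since $\theta_0 + \delta_n n = o(n)$ while $\theta_1 - \delta_n n \ge c_1 n - o(n)$, for all large $n$ the threshold $\tau := c_1 n/2$ — which the adversary is entitled to use, since she knows the network and the contagion model — strictly separates them. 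Now consider the test that reports $H_1$ iff the observed value exceeds $\tau$: its total error is at most $\Pr[Z_0 > \tau] + \Pr[Z_1 \le \tau] \le 2\,\Pr[(E_n \cap G_n)^c] = o(1)$, so $\tvd(Z_0, Z_1) \to 1$ and the optimal hypothesis-testing error tends to $0$, falling below $\epsilon$ for large $n$ — a contradiction.

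I expect the only delicate point to be the bookkeeping of the ``with high probability'' qualifiers: the utility guarantee and the $\Theta(n)$ lower bound on $|C^H_1|$ hold only on high-probability events, so one must argue on their intersection and bound the test's error by the probability of the complement rather than claiming that $Z_0$ and $Z_1$ have literally disjoint supports; one must also be mild about $s$ so that conditioning on ``$C^H_1$ (in)active'' does not inflate these probabilities — when $s = \Theta(1)$ both hypotheses carry non-vanishing prior mass, and when $s$ grows ``$C^H_1$ active'' holds w.h.p.\ so $H_0$ has vanishing mass and the conclusion only becomes easier. Everything else — the $o(n)$ bound on $\theta_0$, the resulting $\Omega(n)$ gap, and the behaviour of the threshold test — is routine once this separation is in hand.
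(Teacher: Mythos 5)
Your proposal is correct, and it is essentially the contrapositive of the paper's argument: both proofs hinge on exactly the same structural fact, namely that the supports of $X_0$ and $X_1$ are separated by a gap $\theta_1-\theta_0=\Omega(n)$, so a mechanism with $o(n)$ error must map the two hypotheses into regions that a single threshold separates. The paper argues in the forward direction: from $\tvd(Z_0,Z_1)\leq 1-\epsilon$ it splits the total-variation sum at the midpoint $\theta_m$, deduces that one of $\sum_{a\le\theta_m}Z_1(a)$ or $\sum_{a>\theta_m}Z_0(a)$ is at least $\epsilon/2$, and extracts a concrete witness input $X^*$ in the support of $X_1$ on which $\M$ errs by $\Omega(n)$ with probability at least $\epsilon/2$ over its own coins --- a slightly more quantitative conclusion (a named bad input and an explicit constant failure probability) that directly violates the ``for all $X_n$'' clause of Definition~\ref{def:strong_utility}. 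Your version instead builds an explicit threshold test and drives its error to $o(1)$; this buys a cleaner treatment of the ``with high probability'' qualifiers, which the paper largely glosses over, and you correctly identify the one genuinely delicate point: the test's error probabilities are computed \emph{conditionally} on $C^H_1$ being active or inactive, so the unconditional bound $\Pr[(E_n\cap G_n)^c]=o(1)$ transfers to the conditional one only when the conditioning events have non-vanishing mass (as they do for $s=\Theta(1)$), or via the observation that on the good graph event the bound $X_0\le s|C^H_2|$ is deterministic and the utility event depends only on $\M$'s coins. With that bookkeeping spelled out, your argument is a valid proof of the theorem.
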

\begin{proof}
Here, we make use of the relation between hypothesis testing error and the total variation distance ($\tvd$), calculated between two distributions $\mu$ and $\nu$ over a range $\X$ as below:
\begin{equation}
    \tvd(\mu, \nu) =
    \frac{1}{2}\sum_{x \in \mathcal{A}}\left| \mu(x) - \nu(x) \right|.
\end{equation}
Since the hypothesis testing error is $1 - \tvd(Z_0, Z_1)$ and given that it is at least $\epsilon$, for the total variation distance between $Z_0$ and $Z_1$ we have:
\begin{equation}\label{eq:tvd_upper_bound}
    \tvd(Z_0, Z_1) = \frac{1}{2}\sum_{a}\left| Z_1(a) - Z_0(a) \right| \leq 1 - \epsilon.
\end{equation}
Suppose that:
\begin{equation*}
    \theta_0 = \max X \sim X_0, \; \theta_1 = \min X \sim X_1,
\end{equation*}
and
$\theta_m = (\theta_0 +\theta_1)/2$.
Figure~\ref{fig:X_0_and_X_1} depicts the relation between the two distributions $X_0$ and $X_1$, and the values $\theta_0$, $\theta_1$ and $\theta_m$.
We can rewrite \eqref{eq:tvd_upper_bound} as:
\begin{align}
    \sum_{a \leq \theta_m} Z_1(a) - Z_0(a) + \sum_{a > \theta_m} Z_0(a) - Z_1(a) &\leq 1 - \epsilon \nonumber \\
    \sum_{a \leq \theta_m} Z_1(a) + \sum_{a > \theta_m} Z_0(a) &\geq \epsilon. \label{eq:sum_Z_above_epsilon}
\end{align}
This means that at least one of the terms on the left side
of \eqref{eq:sum_Z_above_epsilon}
is greater than or equal to $\epsilon/2$. Without loss of generality, we assume that it's $\sum Z_1(a)$. We then have:
\begin{equation}
    \sum_{a \leq \theta_m} Z_1(a) = \sum_{b}X_1(b) \Pr[\M(X) \leq \theta_m] \geq \frac{\epsilon}{2}.
\end{equation}
Assuming that $X^* = \argmax_{X \sim X_1}\Pr[\M(X) \leq \theta_m]$, since $\sum_{b} X_1(b) = 1$, we have:
\begin{equation}
    \Pr[\M(X^*) \leq \theta_m] \geq \frac{\epsilon}{2}.
\end{equation}
Since $\theta_1 - \theta_m = \Omega(n)$, we have at least one $X \sim X_1$ ($X^*$) for which \emph{with constant probability} the perturbed value, $\M(X^*)$, is $\Omega(n)$ distance away. This means that $\M$ cannot provide Sublinear Asymptotic Error.
\end{proof}
\subsection{Wasserstein Mechanism's Utility}\label{sec:wass_utility}
As a case study on the hardness of providing privacy in our setting, we turn to the Wasserstein mechanism~\cite{SongYC:17pufferfish}; a state-of-the-art privacy mechanism designed specifically for correlated data privacy. This mechanism is based on the Pufferfish framework~\cite{kifer2014pufferfish}. Similar to DP, Pufferfish tries to make pairs of \emph{secrets} indistinguishable up to a factor $e^\varepsilon$. For instance, here we need to make it hard for an adversary to distinguish between pairs of secrets: ``$\x_i = 1$'' and ``$\x_i = 0$'' where $i \in V' \subseteq V$.
A key difference between Pufferfish and DP is that in Pufferfish, these pairs of secrets can be \emph{non-exhaustive}, meaning that one can decide to protect only a subset of all individuals. A more detailed definition of the Pufferfish framework and its application in our setting is available in Section~\ref{app:pufferfish} of the Supplementary material.

Here, we show that in a setting similar to ours, for \emph{any} subset of individuals that the Wasserstein mechanism intends to protect, with reasonable utility requirements, the magnitude of the added noise added will be $\Omega(n)$. Note that this result is stronger than our previous findings for more general privacy mechanisms in Section~\ref{sec:high_degree_privacy}, as there is no inference attack involved and the mechanism adds $\Omega(n)$ noise to protect nodes with even little chance of being in the giant component.


The Wasserstein mechanism is
based on the $\infty$-Wasserstein distance between two distributions.
Suppose that $\mu$ and $\nu$ are two probability distributions on $\mathbb{R}$ and denote the set of all joint distributions with marginals $\mu$ and $\nu$ by $\Gamma(\mu, \nu)$. Then, the $\infty$-Wasserstein distance, or in short $W_\infty$, is calculated as below:
\begin{equation}\label{eq:w_inf}
    W_\infty(\mu, \nu) = \inf_{\gamma \in \Gamma(\mu, \nu)} \max_{x, y, \gamma(x, y) > 0} \left| x - y \right|.
\end{equation}
If you consider the probability mass of $\mu$ to be piles of dust along $\mathbb{R}$ and $\nu$ as the distribution of tiny holes with different depths, you can think of any $\gamma \in \Gamma(\mu,\nu)$ as a plan to move the piles of dust into the holes and $W_\infty$ as the minimum possible ``maximum distance any speck of dust travels from a pile in $\mu$ into a hole in $\nu$''.

Given $\mathbb{Q}$, the set of all pairs of secrets and
a sensitive query $F$, the Wasserstein Mechanism adds noise via $\text{Lap}(W/\varepsilon)$ to achieve $\varepsilon$-Pufferfish privacy where $W$ is:
\begin{equation}\label{eq:wasserstein_mechanism}
W = \sup_{(s_i, s_j) \in \mathbb{Q}} W_\infty\left(\mu_i, \mu_j\right),
\end{equation}
where $\mu_i = \Pr\left[F(X) = \cdot \mid s_i\right]$.
\citeauthor{SongYC:17pufferfish} have proven that the Wasserstein Mechanism will always add less noise than \emph{group} differential privacy~\cite{SongYC:17pufferfish}, which makes our findings applicable to group differential privacy as well.
\begin{theorem}\label{theo:wass_noise}
Suppose that $C^H_i$ is the $i$-th biggest connected component of the triggering set, $H$, of a connected graph $G$. Suppose further that $X$ is the total number of activated nodes and $\W(W,\eps)$ is a Wasserstein Mechanism that adds noise to $X$ following $\text{Lap}(W/\eps)$ to provide $\eps$-Pufferfish for a subset $V' \subseteq V$ of nodes. If we have: (1) $|C^H_1| = \Theta(n)$ and (2) $s|C^H_2| = o(n)$, then
$W = \Omega(n)$ for any $V' \subseteq V$.
\end{theorem}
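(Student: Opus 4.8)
The plan is to pin the entire bound on one protected secret pair together with the sharp bimodality of $X$. By the definition of the mechanism in \eqref{eq:wasserstein_mechanism}, $W=\sup_{(s_i,s_j)\in\mathbb{Q}}W_\infty(\mu_i,\mu_j)$, where $\mathbb{Q}$ consists of the pairs $(\x_i{=}1,\x_i{=}0)$ for $i\in V'$ and $\mu_i=\Pr[X=\cdot\mid s_i]$; so it is enough to fix one $i\in V'$ that is not itself a seed (then, as checked below, both $\Pr[\x_i{=}1]$ and $\Pr[\x_i{=}0]$ are positive for large $n$, so the pair is genuinely constrained) and show $W_\infty(\mu_i^{(1)},\mu_i^{(0)})=\Omega(n)$ with $\mu_i^{(c)}:=\Pr[X=\cdot\mid\x_i{=}c]$. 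Working with a fixed triggering set $H$ obeying (1)--(2) and a uniformly random seed set $S$ of size $s$, put $\theta_1=|C^H_1|=\Theta(n)$ and $\theta_0=s\,|C^H_2|=o(n)$. In every realization, $X\ge\theta_1$ if a seed lands in $C^H_1$ (the whole giant then turns on) and $X\le s\,|C^H_2|=\theta_0$ otherwise (only non-giant components, at most $s$ of them, each of size $\le|C^H_2|$, can be on); hence $X$ never lands in $(\theta_0,\theta_1)$, and $\Pr[X\ge\theta_1\mid\x_i{=}c]=\Pr[C^H_1\text{ activated}\mid\x_i{=}c]=:p_c$, $\Pr[X\le\theta_0\mid\x_i{=}c]=1-p_c$.

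Granting momentarily that $p_1\neq p_0$ --- say $p_1>p_0$, the other case being symmetric --- the transport lower bound drops out of the definition \eqref{eq:w_inf}: in any coupling $\gamma$ of $\mu_i^{(1)}$ with $\mu_i^{(0)}$, the first marginal puts mass $p_1$ on $[\theta_1,\infty)$ but the second puts only $p_0$ there, so at least $p_1-p_0>0$ of $\gamma$ sits on pairs $(x,y)$ with $x\ge\theta_1$, $y\le\theta_0$, and every such pair has $|x-y|\ge\theta_1-\theta_0=\Theta(n)-o(n)=\Omega(n)$. Since this holds for every coupling, $W_\infty(\mu_i^{(1)},\mu_i^{(0)})\ge\theta_1-\theta_0=\Omega(n)$, hence $W=\Omega(n)$.

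What remains is the claim $p_1\neq p_0$, which I would settle by splitting on whether $i$ lies in the giant. If $i\in C^H_1$, then $\mathrm{comp}_H(i)=C^H_1$, so (as $i$ is not a seed) the event $\{\x_i{=}1\}$ is literally $\{C^H_1\text{ activated}\}$, giving $p_1=1\neq0=p_0$. If $i\notin C^H_1$, then $C:=\mathrm{comp}_H(i)$ is disjoint from $C^H_1$, and since $S$ is a uniform $s$-subset the indicators of ``$S$ hits $C$'' and ``$S$ hits $C^H_1$'' are negatively associated (hitting two disjoint sets in a sample without replacement); hence $\Pr[C^H_1\text{ activated}\mid\x_i{=}1]<\Pr[C^H_1\text{ activated}]<\Pr[C^H_1\text{ activated}\mid\x_i{=}0]$, i.e. $p_1<p_0$. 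Either way $p_1\neq p_0$, which together with $\Pr[\x_i{=}1],\Pr[\x_i{=}0]>0$ (immediate from $1\le|C^H_1|,|C|$ and $s=o(n)$) finishes the proof.

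I expect this last claim, $p_1\neq p_0$, to be the main obstacle. Since $W_\infty$ is an $L^\infty$-type distance, the whole bound degenerates to $o(n)$ the moment $p_1=p_0$; getting a \emph{strict} separation for \emph{every} node in $V'$, including peripheral ones with only a vanishing chance of lying in the giant, is exactly what underwrites ``$W=\Omega(n)$ for any $V'\subseteq V$'', and it is where the finite seed budget and the existence of the giant are genuinely used. The more routine points to pin down are: that (1)--(2) typically hold only with high probability over the graph and percolation, so the dichotomy of the first paragraph must be run on that good event and its vanishing-probability complement shown not to erode the $\Omega(n)$ gap (including when $\Pr[\x_i{=}c]$ is itself tiny, as for a high-degree $i$); and, if one insists on a seed model without a fixed budget, replacing the negative-association step with a strict Harris/FKG argument that uses connectivity of $G$ to exhibit an edge simultaneously pivotal for $\{\x_i{=}1\}$ and for the giant's activation.
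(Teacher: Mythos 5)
Your proposal is correct and follows essentially the same route as the paper: identify the $\Omega(n)$ gap between $\theta_0 = s|C^H_2|$ and $\theta_1 = |C^H_1|$, show that the two conditional distributions of $X$ given $\x_i=1$ and $\x_i=0$ place different mass on the two sides of that gap, and conclude from the coupling definition of $W_\infty$ in \eqref{eq:w_inf} that some positive mass must travel distance at least $\theta_1-\theta_0=\Omega(n)$. The only substantive difference is where the strict mass difference comes from: the paper keeps the triggering set random and computes the difference explicitly as $\Pr[v\in C^H_1]>0$ (positive because $G$ is connected), whereas you fix $H$ and get $p_1\neq p_0$ by splitting on whether $i\in C^H_1$ and invoking negative association of the uniform seed set --- a valid alternative that is, if anything, more careful about conditioning and normalization than the paper's \eqref{eq:mu_1_<=l}--\eqref{eq:mu_0_<=l}, at the price of assuming the uniform-seed model explicitly.
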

\begin{proof}
For a node $v$, suppose that $\mu^v_i = P(X = \cdot \mid x_v = i)$ for $i \in \{0, 1\}$.
Similar to Theorem~\ref{theo:hiding_c^H_1}, we denote by $\theta_0$ the \emph{largest} cascade when $C^H_1$ is \emph{not} activated, and by $\theta_1$ the \emph{smallest} cascade when $C^H_1$ \emph{is} activated. Note that $\theta_1 - \theta_0 = \Omega(n)$ according to conditions (1) and (2).
Then, the probability of $X \leq \theta_0$ when $v$ is activated is:
\begin{equation}\label{eq:mu_1_<=l}
    \mu_1(X \leq \theta_0) = \Pr[v \notin C^H_1]\Pr[S \cap C^H_1 = \emptyset].
\end{equation}
$C^H_1$ should not receive any seed nodes, and $v$, which is activated should not be in $C^H_1$.
For the probability of $X \leq \theta_0$ when $v$ is \emph{not} activated we have:
\begin{multline}\label{eq:mu_0_<=l}
    \mu_0(X \leq \theta_0) = \Pr[v \in C^H_1] + \\ \Pr[v \notin C^H_1]\Pr[S \cap C^H_1 = \emptyset].
\end{multline}
Let $\mathcal{A} = [0, \theta_0] \times [\theta_1, n]$.
For any joint distribution $\gamma$ with marginals $\mu_0$ and $\mu_1$, we have:
\begin{align}\label{eq:probability_transfer}
    \int_{(x,y)\in \mathcal{A}} \gamma(x,y)d(x,y) &\geq \left| \mu_0(X \leq \theta_0) - \mu_1(X \leq \theta_0) \right| \\
    &\geq \Pr[v \in C^H_1].
\end{align}
\eqref{eq:probability_transfer} is due to the fact that any difference in the total mass of probability in the region $X \leq \theta_0$ between $\mu_0$ and $\mu_1$ has to be transferred into the region $X \geq \theta_1$. Due to \eqref{eq:mu_1_<=l} and \eqref{eq:mu_0_<=l}, this difference is equal to $\Pr[v \in C^H_1]$. Since $G$ is connected, for any $v$: $\Pr[v \in C^H_1] > 0$, which yields:
\begin{equation}
    \exists \, x \in [0, \theta_0], y \in [\theta_1, n] \colon \, \gamma^*(x,y) > 0,
\end{equation}
resulting into:
\begin{equation*}
    W_\infty(\mu_0, \mu_1) \geq \theta_1 - \theta_0 = \Omega(n).
\end{equation*}
Since $W$ is selected by taking the maximum $W_\infty$ across all $v \in V'$, we have $W = \Omega(n)$.
\end{proof}
Note that conditions (1) and (2), as we discussed in Section~\ref{sec:high_degree_privacy}, can be satisfied both in the random networks we studied and in real-world networks.
With $W = \Omega(n)$, for reasonable values of $\varepsilon$, such as a constant, the added noise to protect \emph{any} subset of nodes follows $\text{Lap}(\Omega(n))$ distribution, which destroys data utility.

It is worth noting that the underlying reason for the massive difference between $\mu^v_1$ and $\mu^v_0$ for any node $v$ is indeed the disparity between the values of $X$ when $C^H_1$ is activated and when it is not.
\section{Experiments}\label{sec:exp}

We ran experiments on real world data sets to show that: (1) Giant connected components appear in cascades over real networks under reasonable values of $q$, the transmission rate, and (2) a sizeable number of nodes in these networks appear frequently in the giant connected component, and as a result of the difficulty in hiding the giant connected component's activation status, they are highly vulnerable to inference attacks.

We conduct our experiments on $4$ real-world networks (Table~\ref{tab:real_world}): two co-authorship networks, GrQc and HepTh~\cite{co-authorship} and two co-purchase networks, Amazon DVDs and Amazon Musics~\cite{amazon-co-purchase}. The last two columns of this table includes the average size of the giant component ($|\overline{C^H_1}|$) and the second largest component ($|\overline{C^H_2}|$) of the triggering set, $H$, when $q$ is $0.3$.
\begin{table}[!tbh]
    \centering
    \caption{Real-world networks.}
    \begin{tabular}{lcccc}
        \toprule
        Network & \#Nodes & \#Edges & $|\overline{C^H_1}|$ & $|\overline{C^H_2}|$ \\
        \midrule
        GrQc & $2.4\text{k}$ & $10.9\text{k}$ & $1{,}577.3$ & $28.99$ \\
        HepTh & $4.9\text{k}$ & $19.4\text{k}$ & $3{,}494.7$ & $19.73$ \\
        Amz. Music & $29.4\text{k}$ & $84.3\text{k}$ & $9{,}804.0$ & $197.23$ \\
        Amz. DVDs & $9.5\text{k}$ & $27.6\text{k}$ & $4{,}341.5$ & $139.39$\\
        \bottomrule
    \end{tabular}
    \label{tab:real_world}
\end{table}
\begin{figure}[!tb]
    \centering
    \includegraphics[width=0.85\columnwidth]{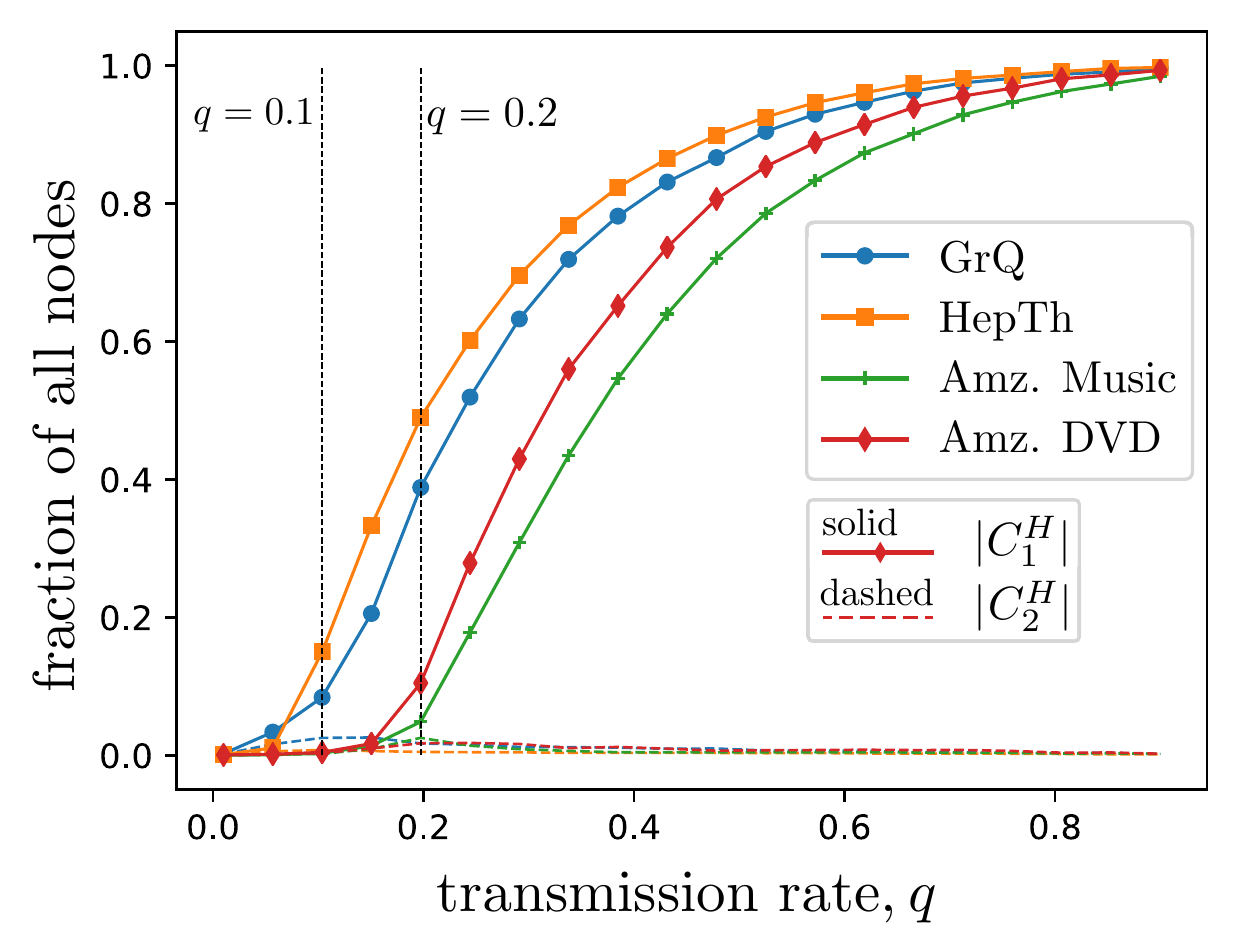}
    \caption{The average size of the largest, $C^H_1$ in solid marked lines, and the second largest connected components, $C^H_2$ in dotted lines, of the triggering set as fraction of all nodes, with with different values of $q$.}
    \label{fig:real_phase_transition}
    \vspace*{-7mm}
\end{figure}
\begin{table*}[!tbh]
    \centering
    \caption{Number and percentage of nodes with probability of being in $C^H_1$ higher than different thresholds.}
    \begin{tabular}{lccccc}
        \toprule
        \multirow{2}[2]{*}{Network} & \multicolumn{5}{c}{Threshold for $\Pr[v \in C^H_1]$} \\
        \cmidrule(lr){2-6}
        &  $\geq 99\%$ & $\geq 95\%$ & $\geq 90\%$ & $\geq 75\%$ & $\geq 50\%$\\
        \midrule
        GrQc & $211 \; (8.71\%)$ & $383 \; (15.81\%)$ & $532 \; (21.97\%)$ & $906 \; (37.41\%)$ & $1{,}765 \; (72.87\%)$ \\
        HepTh & $462 \; (9.41\%)$ & $966 \; (19.68\%)$ & $1371$ $(27.93\%)$ & $2414$ $(49.17\%)$ & $4054$ $(82.58\%)$ \\
        Amz. Music & $153$ $(0.52\%)$ & $408$ $(1.39\%)$ & $751$ $(2.56\%)$ & $2865$ $(9.76\%)$ & $8506$ $(28.98\%)$\\
        Amz. DVDs & $58$ $(0.61\%)$ & $203$ $(2.14\%)$ & $414$ $(4.36\%)$ & $1647$ $(17.36\%)$ & $4529$ $(47.73\%)$\\
        \midrule
        ER($2500$,$5$) & $1$ $(0.0\%)$ & $3$ $(0.1\%)$ & $10$ $(0.4\%)$ & $433$ $(17.3\%)$ & $1{,}737$ $(69.5\%)$\\
        CL($2500$,$5$,$1.1$) & $668$ $(26.7\%)$ & $994$ $(39.8\%)$ & $1{,}269$ $(50.8\%)$ & $2{,}043$ $(81.7\%)$ & $2{,}459$ $(98.4\%)$\\
        CL($2500$,$5$,$1.5$) & $952$ $(38.1\%)$ & $1{,}534$ $(61.4\%)$ & $1{,}902$ $(76.1\%)$ & $2{,}354$ $(94.2\%)$ & $2{,}481$ $(99.2\%)$\\
        CL($2500$,$5$,$2$) & $622$ $(24.9\%)$ & $1{,}171$ $(46.8\%)$ & $1{,}599$ $(64.0\%)$ & $2{,}234$ $(89.4\%)$ & $2{,}446$ $(97.8\%)$\\
        \bottomrule
    \end{tabular}
    \label{tab:P(v_in_giant)}
\vspace{-4mm}
\end{table*}
For each network, $1000$ triggering sets are generated at random.
As a result of the massive difference seen between $|C^H_1|$ and $|C^H_2|$, it will be extremely hard to hide the activation status of $C^H_1$. To find out for what values of $q$ a giant connected component  appears, we have tested $20$ values of $q$ between $0.05$ and $0.9$ and reported $\overline{|C^H_1|}$ and $\overline{|C^H_2|}$ over $50$ realizations. The results are depicted in Figure~\ref{fig:real_phase_transition}. Initially, for extremely low values of $q$, the two are almost indistinguishable in size, but as $q$ grows, $C^H_1$ quickly transitions into the \emph{giant component}, containing a sizeable portion of all nodes while $|C^H_2|$ remains negligible in comparison to $n$.
For values of $q$ as low as $0.1$ in the case of the two co-authorship networks, we already see a significant difference between the two components. This sharp increase happens in the co-purchase networks, Amazon DVD and Music, at around $q = 0.2$. 
In comparison, the transmission probability of high-risk contacts in COVID-19 is estimated at about $15.6\%$~\cite{covid19highriskprob}. 
This shows that in real-world contagions there could be a giant connected component that endangers the privacy of those individuals in it.
Next, we estimate the probability of each node's presence in $C^H_1$ to measure their vulnerability against inference attacks.
The results for the $4$ real-world networks, along with $4$ random synthetic networks for comparison, are available in Table~\ref{tab:P(v_in_giant)}. In each column, we report the number and percentage of nodes whose probability of appearing in $C^H_1$ is above a threshold mentioned at the top. Results are aggregated over $1000$ realizations of $H$ for each network, using $q = 0.3$, and the thresholds are $99\%$, $95\%$, $90\%$, $75\%$ and $50\%$ respectively. First, notice that there are hundreds of nodes across all $4$ networks with values above even very high thresholds, such as $99\%$ or $95\%$.
This means that if the privacy mechanism has Sublinear Asymptotic Error, an adversary can infer the real status of these nodes with a confidence above $95\%$ or $99\%$. In addition, the numbers vary between the co-authorship networks, GrQ and HepTh, and the Amazon DVD and Music networks. As discussed earlier, the structure of each network plays a key role in these probability values.
For comparison, we repeat this experiment on $4$ synthetic randomized networks with $2500$ nodes as well: an \er~graph with $n = 2500$ and $np = 5$, and $3$ Chung-Lu power-law networks with minimum expected degree, $d = 5$ and the scale parameter, $b$ equal to $1.1$, $1.5$, and $2$ (real-world values lie in $(1, 2]$~\cite{barabasi1999emergence}).
With the selected setting, all $4$ of these networks will have a giant component in their corresponding triggering sets w.h.p.
As expected, we observe that the real-world networks are much more resembled by the Chung-Lu power-law networks, rather than the \er~graphs. As we discussed in Section~\ref{sec:power-law}, this is due to the fact that \er~graphs do not model the uneven degree distribution of real-world networks, which is partially responsible for the unbalanced distribution of $\Pr[v \in C^H_1]$ among nodes.

\section{Conclusion and Future Work}\label{sec:conclusion}
It is known that if a contagion follows the Independent Cascade model,
a giant connected component with a constant fraction of all individuals likely appears.  We established that if a privacy mechanism provides sublinear asymptotic error,
hiding the status of this component is impossible. Because of this, many individuals are vulnerable to inference attacks, with those that greatly influence others, a.k.a influencers, in the most danger.
Finally, we have shown that the current state-of-the-art mechanism is incapable of providing sublinear ($o(n)$) privacy for \emph{any} node in the two classes of random networks we studied and in general graphs. 
Our findings put new emphasis on the importance of protecting the topology of social networks.
It is interesting to study other contagion models, such as the Linear Threshold model, seeking for a similar impossibility result.
The final question is, could there be adequate privacy for individuals with new privacy frameworks or minimal changes in our assumptions?
\bibliographystyle{plainnat}
\bibliography{reference,contagion-privacy,jiepub,privacy}

\begin{thebibliography}{25}
\providecommand{\natexlab}[1]{#1}
\providecommand{\url}[1]{\texttt{#1}}
\expandafter\ifx\csname urlstyle\endcsname\relax
  \providecommand{\doi}[1]{doi: #1}\else
  \providecommand{\doi}{doi: \begingroup \urlstyle{rm}\Url}\fi

\bibitem[Akbarpour et~al.(2018)Akbarpour, Malladi, and
  Saberi]{JustFewRandomSeeds}
Mohammad Akbarpour, Suraj Malladi, and Amin Saberi.
\newblock Just a few seeds more: value of network information for diffusion.
\newblock \emph{Available at SSRN 3062830}, 2018.

\bibitem[Alon and Spencer(2004)]{TheProbabilisticMethod}
Noga Alon and Joel~H Spencer.
\newblock \emph{The probabilistic method}.
\newblock John Wiley \& Sons, 2004.

\bibitem[Barab{\'a}si and Albert(1999)]{barabasi1999emergence}
Albert-L{\'a}szl{\'o} Barab{\'a}si and R{\'e}ka Albert.
\newblock Emergence of scaling in random networks.
\newblock \emph{science}, 286\penalty0 (5439):\penalty0 509--512, 1999.

\bibitem[Chlebus(2009)]{p-series-bound}
Edward Chlebus.
\newblock An approximate formula for a partial sum of the divergent p-series.
\newblock \emph{Applied Mathematics Letters}, 22\penalty0 (5):\penalty0
  732--737, 2009.

\bibitem[Christakis and Fowler(2007)]{christakis2007spread}
Nicholas~A Christakis and James~H Fowler.
\newblock The spread of obesity in a large social network over 32 years.
\newblock \emph{New England Journal of Medicine}, 357\penalty0 (4):\penalty0
  370--379, 2007.

\bibitem[Chung and Lu(2002)]{Chung-Lu-Networks}
Fan Chung and Linyuan Lu.
\newblock Connected components in random graphs with given expected degree
  sequences.
\newblock \emph{Annals of Combinatorics}, 6\penalty0 (2):\penalty0 125--145,
  2002.

\bibitem[Chung et~al.(2009)Chung, Horn, and Lu]{Chung2009-op}
Fan Chung, Paul Horn, and Linyuan Lu.
\newblock Percolation in general graphs.
\newblock \emph{Internet Math.}, 6\penalty0 (3):\penalty0 331--347, January
  2009.

\bibitem[Dwork and Roth(2013)]{dwork2013algorithmic}
C.~Dwork and A.~Roth.
\newblock {The Algorithmic Foundations of Differential Privacy}.
\newblock \emph{Foundations and Trends in Theoretical Computer Science},
  9\penalty0 (3-4):\penalty0 211--407, 2013.
\newblock \doi{10.1561/0400000042}.

\bibitem[Dwork et~al.(2006)Dwork, McSherry, Nissim, and Smith]{DworkMNS06}
C.~Dwork, F.~McSherry, K.~Nissim, and A.~Smith.
\newblock {Calibrating Noise to Sensitivity in Private Data Analysis}.
\newblock In \emph{Proceedings of the Third Conference on Theory of
  Cryptography}, pages 265--284, 2006.

\bibitem[Frieze and Karo{\'n}ski(2016)]{frieze2016introduction}
Alan Frieze and Micha{\l} Karo{\'n}ski.
\newblock \emph{Introduction to random graphs}.
\newblock Cambridge University Press, 2016.

\bibitem[Ghosh and Kleinberg(2017)]{GhoshK:17inferential}
A.~Ghosh and R.~D. Kleinberg.
\newblock Inferential privacy guarantees for differentially private mechanisms.
\newblock In \emph{Proceedings of the 8th Innovations in Theoretical Computer
  Science Conference (ITCS 2017)}, 2017.

\bibitem[He et~al.(2014)He, Machanavajjhala, and Ding]{he2014blowfish}
Xi~He, Ashwin Machanavajjhala, and Bolin Ding.
\newblock Blowfish privacy: Tuning privacy-utility trade-offs using policies.
\newblock In \emph{Proceedings of the 2014 ACM SIGMOD International Conference
  on Management of Data}, pages 1447--1458, 2014.

\bibitem[Kempe et~al.(2003)Kempe, Kleinberg, and Tardos]{kempe2003maximizing}
David Kempe, Jon Kleinberg, and {\'E}va Tardos.
\newblock Maximizing the spread of influence through a social network.
\newblock In \emph{Proceedings of the ninth ACM SIGKDD international conference
  on Knowledge discovery and data mining}, pages 137--146, 2003.

\bibitem[Kifer and Machanavajjhala(2011)]{Kifer2011-pu}
Daniel Kifer and Ashwin Machanavajjhala.
\newblock No free lunch in data privacy.
\newblock In \emph{Proceedings of the 2011 {ACM} {SIGMOD} International
  Conference on Management of Data}, pages 193--204, 2011.

\bibitem[Kifer and Machanavajjhala(2014)]{kifer2014pufferfish}
Daniel Kifer and Ashwin Machanavajjhala.
\newblock Pufferfish: A framework for mathematical privacy definitions.
\newblock \emph{ACM Transactions on Database Systems (TODS)}, 39\penalty0
  (1):\penalty0 3, 2014.

\bibitem[Leskovec et~al.(2007{\natexlab{a}})Leskovec, Adamic, and
  Huberman]{amazon-co-purchase}
Jure Leskovec, Lada~A Adamic, and Bernardo~A Huberman.
\newblock The dynamics of viral marketing.
\newblock \emph{ACM Transactions on the Web (TWEB)}, 1\penalty0 (1):\penalty0
  5--es, 2007{\natexlab{a}}.

\bibitem[Leskovec et~al.(2007{\natexlab{b}})Leskovec, Kleinberg, and
  Faloutsos]{co-authorship}
Jure Leskovec, Jon Kleinberg, and Christos Faloutsos.
\newblock Graph evolution: Densification and shrinking diameters.
\newblock \emph{ACM transactions on Knowledge Discovery from Data (TKDD)},
  1\penalty0 (1):\penalty0 2--es, 2007{\natexlab{b}}.

\bibitem[Phucharoen et~al.(2020)Phucharoen, Sangkaew, and
  Stosic]{covid19highriskprob}
Chayanon Phucharoen, Nichapat Sangkaew, and Kristina Stosic.
\newblock The characteristics of covid-19 transmission from case to high-risk
  contact, a statistical analysis from contact tracing data.
\newblock \emph{EClinicalMedicine}, 2020.

\bibitem[Rezaei and Gao(2019)]{rezaei19privacy}
Aria Rezaei and Jie Gao.
\newblock On privacy of socially contagious attributes.
\newblock In \emph{Proceedings of the 19th IEEE International Conference on
  Data Mining (ICDM'19)}, pages 1294--1299, November 2019.

\bibitem[Song et~al.(2017)Song, Wang, and Chaudhuri]{SongYC:17pufferfish}
Shuang Song, Yizhen Wang, and Kamalika Chaudhuri.
\newblock Pufferfish privacy mechanisms for correlated data.
\newblock In \emph{Proceedings of the 2017 ACM International Conference on
  Management of Data}, pages 1291--1306, New York, NY, USA, 2017.
\newblock ISBN 9781450341974.
\newblock \doi{10.1145/3035918.3064025}.

\bibitem[Task and Clifton(2012)]{Task2012-hi}
C~Task and C~Clifton.
\newblock A guide to differential privacy theory in social network analysis.
\newblock In \emph{2012 {IEEE/ACM} International Conference on Advances in
  Social Networks Analysis and Mining}, pages 411--417, August 2012.

\bibitem[Van Der~Hofstad(2016)]{HofstadRandomGraph}
Remco Van Der~Hofstad.
\newblock \emph{Random graphs and complex networks}, volume~1.
\newblock Cambridge university press, 2016.

\bibitem[Yang et~al.(2015)Yang, Sato, and Nakagawa]{YangSN:15bayesian}
Bin Yang, Issei Sato, and Hiroshi Nakagawa.
\newblock Bayesian differential privacy on correlated data.
\newblock In \emph{Proceedings of the 2015 ACM SIGMOD International Conference
  on Management of Data}, SIGMOD '15, pages 747--762, New York, NY, USA, 2015.
\newblock ISBN 9781450327589.
\newblock \doi{10.1145/2723372.2747643}.

\bibitem[{Zhao} et~al.(2017){Zhao}, {Zhang}, and {Poor}]{ZhaoZP:17ddp}
J.~{Zhao}, J.~{Zhang}, and H.~V. {Poor}.
\newblock Dependent differential privacy for correlated data.
\newblock In \emph{2017 IEEE Globecom Workshops (GC Wkshps)}, pages 1--7, 2017.

\bibitem[{Zhu} et~al.(2015){Zhu}, {Xiong}, {Li}, and
  {Zhou}]{ZhuXLZ:15correlated}
T.~{Zhu}, P.~{Xiong}, G.~{Li}, and W.~{Zhou}.
\newblock Correlated differential privacy: Hiding information in non-iid data
  set.
\newblock \emph{IEEE Transactions on Information Forensics and Security},
  10\penalty0 (2):\penalty0 229--242, 2015.

\end{thebibliography}
\clearpage
\newpage
\appendix
\section{Pufferfish Privacy}\label{app:pufferfish}
In Pufferfish privacy, the model seeks to protect pairs of ``secrets'' by making the two secrets \emph{indistinguishable} from each other by an adversary given the output of the privacy mechanism, $\M(X)$.
In our case, the two secrets are whether $\x_v = 0$ or $\x_v = 1$ for each $v$. Pufferfish privacy also provide guarantees with respect to a set of \emph{data-generating} distributions governing how the dataset is created. The meaning of the privacy guarantee is then conditional: if the data was generated according to the assumed family of data-generating distributions, then a privacy mechanism will make it hard for an adversary to distinguish whether secret $i$, e.g. $\x_v = 0$, or secret $j$, e.g. $\x_v = 1$, is true given that $\M(X)$ is known by the public. Mechanisms that guarantee Pufferfish privacy introduce noise into the computation to provide this guarantee.

More formally,
let $\mathbb{S}$ be a set of secrets, $\mathbb{Q} \subseteq \mathbb{S} \times \mathbb{S}$ a set of pairs of secrets we wish an adversary cannot distinguish between, and $\Pi$ a set of \emph{data-generating} distributions, $\pi \in \Pi$, governing how the dataset is created.
The formal definition of Pufferfish privacy is given below.
\begin{definition}[Pufferfish Privacy~\cite{kifer2014pufferfish}]\label{def:pufferfish} A randomized mechanism $\M$ in framework ($\mathbb{S},\mathbb{Q},\Pi$) provides $\varepsilon$-Pufferfish privacy if for any data-generating distribution $\pi \in \Pi$, any pairs of secrets $s_i, s_j \in \mathbb{Q}$ such that $P(s_i \mid \pi), P(s_j \mid \pi) > 0$, and for all possible outputs $w \in \text{Range}(\M)$ we have:
\begin{equation}\label{eq:pufferfish}
    \frac{
        \Pr[\M(X) = w \mid s_i, \pi]}{
        \Pr[\M(X) = w \mid s_j, \pi]} \leq e^\varepsilon.
\end{equation}
\end{definition}
Note that the above is almost exactly the guarantee seen in Differential Privacy. The key difference is that the randomness of $X$, the dataset, is taken into consideration when the probability ratio in \eqref{eq:pufferfish} is calculated.
The goal of Pufferfish privacy is to incorporate correlations between data points, via the assumptions on $\Pi$, to enable us to measure the privacy guarantees for correlated data; something that DP is known to lack. For instance, in our case the set of data-generating distributions could be \emph{all} possible graphs generated by a random network given fixed parameters. Alternatively, with a fixed graph, the set $\Pi$ could include all possible triggering sets. The flexibility of $\mathbb{S}$ and $\mathbb{Q}$ allows the mechanism designer to only aim for the protection of the privacy of a subset of nodes, not all of them.

\section{Proofs of Lemmas~\ref{lem:P(x_v|M(X))_bound} and \ref{lem:trig_approx}}\label{app:trig_approx_proof}
\subsection{Lemma~\ref{lem:P(x_v|M(X))_bound}}
\begin{proof}
As we discussed in Section~\ref{sec:er_graphs}, the condition in the first equation means that $C^H_1$ is activated. Once $C^H_1$ is active, one way for node $i$ to become activated is to be a part of $C^H_1$. Obviously, there are other ways for that to happen, hence the lower bound. An exactly similar argument goes for the second equation as well.
\end{proof}
\subsection{Lemma~\ref{lem:trig_approx}}
\begin{proof}
Since we are fixing the degree to be $k$, and each node has an equal probability of being one of the $k$ neighbors of $i$, the probability that $i$ chooses \emph{none} of the nodes in $C^H_1$ as a neighbor is equal to:
\begin{equation}\label{eq:choose_exponential}
    \frac{\binom{n - m}{k}}{\binom{n}{k}} \sim - \exp\left(\frac{km}{n}\right).
\end{equation}
In the above, we have used Stirling's approximation of $n! \sim \sqrt{2\pi}(n/e)^n$.
Substituting $m$ in \eqref{eq:choose_exponential} with $yn \pm o(n)$ as the number of nodes in  $C^H_1$ yields \eqref{eq:i_C^H_1}.
\end{proof}
\section{Proof of Theorem~\ref{theo:cl_influencer_privacy}}\label{app:cl_influence_proof}
\begin{proof}
First notice that unlike Theorem~\ref{theo:er_influencer_privacy}, here we also prove that the subset of nodes that will not have adequate protection by $\M$ can in fact grow arbitrarily in size.
In other words:
\[
\lim_{n \rightarrow \infty} \left| V' \right| \rightarrow \infty.
\]
Similar to the arguments in Theorem~\ref{theo:er_influencer_privacy}, we can observe that either $C^H_1$ is active and $\M(X) \geq |C^H_1| - e_{\M} = \Omega(n)$, or it is inactive and $\M(X) \leq s|C^H_2| + e_{\M} = o(n)$. Because of this, one can infer the status of $C^H_1$ given the value of $\M(X)$.
We can now use Lemma~\ref{lem:P(x_v|M(X))_bound} here to bound the two probabilities in \eqref{eq:cl_influencer_privacy} by $\Pr[i \in C^H_1]$. For its complement, $\Pr[i \notin C^H_1]$ we have:
\begin{align}
    \Pr[i \notin C^H_1] &\leq \prod_{j \in C^H_1} 1 -  \min\left(1, \frac{w_i w_j}{\ell_n}\right) \nonumber \\
    & \leq \left(1 -  \min\left(1, \frac{w_i w_n}{\ell_n}\right)\right)^{\alpha n}, \label{eq:first_step_cl_lb}
\end{align}
since $w_n = dq < w_j$ $(n > j)$ due to \eqref{eq:w_i}. In extreme cases when $b$ is too small, we might have $w_1 w_n / \ell_n \geq 1$, making the node with the highest degree ($i = 1$) connected to every single node with probability $1$ and the whole graph becomes connected. This means that
with any number of seed nodes $s > 0$, the \emph{whole} network is activated and there could be no privacy for any node. Apart from such cases, which are rare in reality as $b$ usually lies in $(1,2]$ in real-world networks~\cite{barabasi1999emergence}, we can assume that $w_i w_n \leq \ell_n$ and rewrite \eqref{eq:first_step_cl_lb} as:
\begin{align}
    \Pr[i \notin C^H_1] & \leq \exp\left(-\frac{w_i w_n \alpha n}{\ell_n}\right) \nonumber\\
    & \leq \exp\left(-\frac{d^2q^2 \alpha n^{\beta + 1}}{i^\beta dqn^\beta\sum_j j^{-\beta}}\right) \nonumber \\
    & \leq \exp\left(-\frac{dq \alpha n}{i^\beta \sum_j j^{-\beta}}\right). \label{eq:second_step_cl_lb}
\end{align}
The above approaches $0$ as $n$ grows large enough when:
\begin{equation}\label{eq:general_cl_bound}
    i = o\left(\left[\frac{n}{\sum_j j^{-\beta}}\right]^{1/\beta}\right)
\end{equation}
Depending on the value of $\beta$ ($1/b$), there could be $3$ cases for
\eqref{eq:general_cl_bound}:
\benum
\item In case $\beta > 1$ ($0 < b < 1$), the summation $\sum^\infty_j j^{-\beta}$ is a constant number. Then, for $i$ we can write:
\begin{equation}\label{eq:cl_bound_1/b>0}
    i = o\left(n^{1/\beta}\right) = o\left(n^b\right).
\end{equation}
\item In case $\beta = 1$, the summation $\sum^n_{j=0} j^{-1}$ tends to $\Theta(\log{n})$. Using this, we can write:
\begin{equation}\label{eq:cl_bound_b=1}
i = o\left(\frac{n}{\log{n}}\right)
\end{equation}
\item In case $0 < \beta < 1$, the partial summation $\sum_{j=0}^n j^{-\beta}$ is $\Theta(n^{1 - \beta})$~\cite{p-series-bound}. Using this, we can write:
\begin{equation}\label{eq:cl_bound_1/b<0}
i = o\left(\left[\frac{n}{n^{1 - \beta}}\right]^{1/\beta}\right) = o(n)
\end{equation}
\eenum
The $3$ equations above show that
for any $\epsilon$ in Theorem~\ref{theo:cl_influencer_privacy}, the subset $V' \subseteq V$ includes all nodes ranked $i$ where $i \leq f(n)$ and $f$ meets the corresponding asymptotic condition depending on $b$ in the equations above. Note that $f(n)$, and subsequently $|V'|$, can grow arbitrarily large. For instance in the case of $b = 1$, to meet the condition in \eqref{eq:cl_bound_b=1}, we can set $f(n) = n/\log^2{n}$. This concludes the proof.
\end{proof}

\end{document}